\newtheorem{theorem}{Theorem}[section]
\newtheorem{lemma}{Lemma}[section]
\newtheorem{Definition}{Definition}[section]
\DeclareMathOperator{\len}{len}
\DeclareMathOperator{\cost}{cost}
\DeclarePairedDelimiterX\set[1]{\lbrace}{\rbrace}{#1}
\DeclarePairedDelimiter{\floor}{\lfloor}{\rfloor}
\DeclarePairedDelimiter{\ceil}{\lceil}{\rceil}
\newcommand{\Natural}{\mathbb{N}}
\newcommand{\low}{\ell}
\newcommand{\high}{u}
\newcommand{\layer}{\lambda}
\newcommand{\setz}[1]{\set{0, \ldots, #1}}
\newcommand{\dotcup}{\mathbin{\dot{\cup}}}
\newcommand{\Sofaclap}{\textsc{Solar Farm Cable Layout Problem}\xspace}
\newcommand{\sofaclap}{\textsc{SoFaCLaP}\xspace}
\newcommand{\sofaclapfeas}{\textsc{Constrained Layer Tree}\xspace}
\newcommand{\sofaclapembed}{\textsc{SoFaCLaP Embedding}\xspace}
\newcommand{\threepartition}{\textsc{3-Partition}\xspace}
\newtheorem{observation}[lemma]{Observation}
\begin{document}

\newcommand\relatedversion{}

\title{The Constrained Layer Tree Problem and Applications to Solar Farm Cabling}
\author{Thomas Bläsius\thanks{Karlsruhe Institute of Technology}
\and Max Göttlicher\thanks{Karlsruhe Institute of Technology, funded by German Research Foundation (DFG) as part of the Research Training Group GRK 2153: Energy Status Data -- Informatics Methods for its Collection, Analysis and Exploitation.}
\and Sascha Gritzbach\footnotemark[1]
\and Wendy Yi\footnotemark[2]}

\date{}

\maketitle

\begin{abstract} \small\baselineskip=9pt%
  Motivated by the cabling of solar farms, we study the problem \sofaclapfeas.
  At its core, it asks whether there exists a tree that connects a set of sources (the leaves) to one sink (the root) such that certain capacity constraints at the inner nodes are satisfied.
  Our main algorithmic contribution is a dynamic program with various optimizations for \sofaclapfeas.
  It outperforms the previously used MILP by multiple orders of magnitude.
  Moreover, our experiments show that the somewhat abstract problem \sofaclapfeas is actually the core of the cabling problem in solar farms, i.e., the feasible solution produced by our dynamic program can be used to bootstrap an MILP that can then find good solutions for the original cabling problem efficiently.
\end{abstract}

\section{Introduction}
A solar farm consists of various different components.
Besides the prominent PV\footnote{PV -- short for photovoltaic} panels, which are mounted on racks in groups of so-called PV strings, there are less visible components that are vital to harvest the produced energy.
To give an example, the PV strings can be paired up using Y-connectors.
Multiple Y-connectors are connected to combiner boxes, which in turn are grouped by recombiner boxes.
These recombiner boxes are then connected to an inverter, which transforms the direct current produced by the PV strings into alternating current.
Finally, step-up transformers feed the generated electricity into the power grid.
Besides the hierarchical combination of PV strings, these components can come with varying monitoring and safety equipment.
Beyond this example, there is a multitude of components to choose from and which components exactly are used heavily depends on the decisions made in the planning process.
See \cite{abb2019_photovoltaic_plants} for more technical details.
A common pattern, however, is that PV strings, forming the basic unit, are connected in a hierarchical fashion, with components allowing for increasing current in the higher layers of the hierarchy.

When considered as an algorithmic problem, the task boils down to constructing a tree with a fixed number of layers that has the PV strings as leaves.
As a hard-constraint, each component on a given layer can only handle current within a certain range, i.e., there are capacities that restrict the number of strings in the subtree below the component.
The goal is to choose and place components on given candidate positions such that the total cabling cost is minimized.
This problem has been formalized by Gritzbach, Stampa, and Wolf as the so-called \Sofaclap (\sofaclap)~\cite{gritzbach2022sofaclap}.

More generally, this can be categorized as a network design problem~\cite{Networ_Desig_Probl-Ciesl09}.
The arguably oldest and most famous network design problem is the minimum spanning tree (MST) problem~\cite{Histor_Minim_Spann_Tree_Probl-GrahamHell85}.
When it comes to practical applications, however, just computing the MST is often not sufficient.
Unfortunately, seemingly inconspicuous additional requirements or variations of the problem often make it hard.
In fact, the book by Garey and Johnson~\cite{Comput_Intrac-GareyJohns90} contains a whole section on spanning trees listing thirteen different NP-hard variants of the problem.
This includes the Steiner tree problem~\cite{Stein-Hakim71}, where the resulting tree only needs to contain a specified set of terminal vertices, while all other vertices, called Steiner points, are optional~\cite{Stein-Hakim71}.
Although the Steiner tree problem and its various variants are computationally much harder than the MST problem, efficient MST-algorithms are often a core ingredient for (heuristically) solving the more complicated problem; see, e.g., \cite{Solvin_Stein-Ljubić21} for a recent survey on variants of the Steiner tree problem.

As pointed out by Gritzbach et al.~\cite{gritzbach2022sofaclap}, \sofaclap is closely related to the Steiner tree problem.
There are, however, differences coming from the hierarchical (or multi-layer) nature of the desired trees as well as from the capacity constraints of the connecting components.
Moreover, there are resemblances to other classes of problems that have been studied before, like multi-layer network design~\cite{Multi_Layer_Networ_Desig_Probl-KnippLardeux07,Taxon_Multil_Networ_Desig_Survey-Crain22}, hierarchical network design~\cite{Hierar_Networ_Desig_Probl-Curren86}, and multi-level facility location~\cite{Multi_Level_Facil_Locat_Probl-Ortiz18}.
These problems are often directly motivated by applications, which results in a multitude of variants with varying constraints and different optimization criteria.
It is not surprising that these problems are predominantly studied by the operations research community, where mathematical programming formulations, often mixed integer linear programs (MILP), can be flexibly adjusted to the specific needs and years of engineering can be utilized by using off-the-shelf solvers like Gurobi \cite{gurobi}.
For \sofaclap, Gritzbach et al.~\cite{gritzbach2022sofaclap} also provide and evaluate a formulation as MILP.
They observe that, while Gurobi is able to solve small and medium sized instances optimally in reasonable time, it struggled on several larger instances to even find a feasible solution satisfying all hard-constraints.
Moreover, they observe that, if Gurobi can find a feasible solution, it is also quite good at finding a solution of good quality.
This indicates that the core problem here does not come from the optimization problem of finding a cheap layout but from the decision problem of finding any tree that satisfies the capacity constraints with the given set of components.

In this paper, we study a problem that we call \sofaclapfeas, which captures the difficult core of the solar farm problem in isolation.
Our main contributions are as follows.
We provide a dynamic program that solves \sofaclapfeas and devise a multitude of optimizations that improve its performance.
Our evaluation shows that our dynamic program heavily outperforms Gurobi trying to solve the MILP formulation\footnote{This is the same formulation as for \sofaclap but with a constant objective function since we only ask for feasibility.} of \sofaclapfeas.
More specifically, our dynamic program is able to solve a multitude of instances, where Gurobi fails to find a feasible solution.
Moreover, bootstrapping the MILP solver with a feasible solution produced by our dynamic program lets it quickly find good solutions for instances that were unsolvable before, which shows that \sofaclapfeas is indeed the difficult core of \sofaclap.
For the optimizing of a feasible solution for \sofaclap, we also provide and evaluate different heuristic solutions.

\subsection{The \sofaclapfeas Problem}

In a nutshell, \sofaclapfeas can be described as follows; see Section~\ref{sec:prelims} for a formal definition.
The goal is to build a tree with $\lambda$ layers and $n_0$ leaves on the bottom layer that satisfies the following constraints.
For each layer $i$, there is an upper bound $n_i$ on the number of vertices in layer $i$.
Additionally, there are lower and upper capacities $\ell_i$ and $u_i$, and any vertex on layer $i$ must have between $\ell_i$ and $u_i$ leaves below it.

Relating this back to the solar farm problem, $n_0$ is the number of PV strings we want to connect, the layers represent the different types of components, $n_i$ for $i > 0$ gives an upper bound on the number of available components of each type, and the capacities specify how many strings can be bundled by each of the components.
Although the \sofaclapfeas problem is clearly motivated by the cabling of solar farms, it is a quite natural problem and we believe that it can be of independent interest as a subproblem appearing in other applications\footnote{We named it \sofaclapfeas, to reflect the fact that it is not specific to the solar farm application.}.

We note that the \sofaclapfeas problem is not completely new.
Gritzbach et al.\ have mentioned it as a subproblem of \sofaclap and in fact conjectured it to be solvable in polynomial time~\cite[Conjecture~1]{gritzbach2022sofaclap}.
To the best of our knowledge, \sofaclapfeas has not been studied beyond that.

\subsection{Our Dynamic Program}

We give a dynamic program for \sofaclapfeas in Section~\ref{sec:dp-algor-sofacl} that runs in $\mathcal O(n_0^{\lambda + 2}\cdot \lambda^2\cdot \log n_0)$ time.
We note that this partially confirms the conjecture by Gritzbach et al.~\cite[Conjecture~1]{gritzbach2022sofaclap}; with some major caveats.
Firstly, this is only polynomial in the number of leaves $n_0$ if the number of layers $\lambda$ is constant.
Secondly, an instance of \sofaclapfeas can be represented very compactly.
One only needs to specify $n_0, \dots, n_\lambda$ as well as the capacities $\ell_1, \dots, \ell_\lambda$ and $u_1, \dots, u_\lambda$, which are just $3 \lambda + 1$ numbers.
Thus, unless we use a unary encoding for the numbers, even a running time of $\mathcal O(n_0)$ is exponential in the input size.
We note that, with this perspective, it is even unclear whether the problem is in NP, as explicitly writing down a solution uses an exponential number of bits compared to the number of bits required to encode the instance.

Our dynamic program goes over the number of leaves in the solution.
Roughly speaking, for each number of leaves up to $n_0$, we compute all possible ways a tree with that many leaves can be built.
As typical for dynamic programs, the number of these partial solutions is kept low by viewing different trees that share certain characteristics as equivalent.
Additionally, we provide several techniques in Section~\ref{sec:optim-dp-algor} that heavily improve the practical performance of our dynamic program while not affecting its correctness.
For example, one type of optimizations reduces the number of stored partial solutions significantly by identifying partial solutions that do not contribute to the full solution.
Other optimizations try to complete an already computed partial solution greedily to a full solution, and terminate the dynamic program early on success.

\subsection{Performance}

In our experiments on randomly generated instances of \sofaclapfeas, our dynamic program with the proposed optimizations solves a significant number of instances that are not solved by Gurobi in reasonable time.
On more than $90$\% of the tested instances, our dynamic program achieves a speed-up of more than $100$ compared to Gurobi and often even reaches a much higher speed-up\footnote{In fact, there are multiple instances that we solve in \SI{2}{ms} for which Gurobi runs in the timeout of \SI{1}{h}; a speed-up of six orders of magnitude.}.
Besides this comparison, we make a detailed evaluation of our dynamic program that yields interesting insights into how and why it performs the way it does.

\subsection{Optimizing \sofaclap}

Our study of the \sofaclapfeas problem is motivated by the application of building solar farms.
To verify that solving \sofaclapfeas can actually help for this kind of application, we run Gurobi initialized with a valid solution to evaluate whether it can find good solutions for instances of \sofaclap for which it was not able to find any solution before.
Our experiments show that this is indeed true, confirming that \sofaclapfeas is the hard core of the problem.
Moreover, we provide and evaluate heuristics as an alternative to an MILP for solving \sofaclap.
Our heuristics yield worse solutions than Gurobi but are slightly faster.

\section{Preliminaries}
\label{sec:prelims}

Let $G = (V, E)$ be a directed acyclic graph.
Vertices with no incoming or outgoing edge are called \emph{source} and \emph{sinks}, respectively.
We call $G$ a \emph{layer graph} if every path from a source to a sink has the same length $\lambda$.
This naturally partitions the vertices into $\lambda + 1$ disjoint \emph{layers} $V = V_0 \dotcup \dots \dotcup V_\lambda$ such that every edge goes from a vertex in $V_{i - 1}$ to a vertex in $V_i$ for some $i \in [\lambda]$, where $[\lambda]$ denotes the set $\{1, \dots, \lambda\}$.
Note that $V_0$ are the sources and $V_\lambda$ are the sinks.
We denote with $n_i = |V_i|$ the number of vertices in layer $i$.
A layer graph is \emph{complete} if all possible edges exist, i.e., $E = \bigcup_{i \in [\lambda]}V_{i - 1} \times V_{i}$.

A \emph{layer tree} is a layer graph that is also a rooted tree, i.e., the leaves (sources) are in layer $0$, there is exactly one root (sink) in layer $\lambda$, the distance from every leaf to the root is $\lambda$, and each vertex except the root has exactly one parent.
The \emph{weight} of a vertex $v$ in a layer tree is the number of leaves in the subtree below $v$, i.e., the number of leaves that have a directed path to $v$.
We denote the weight with $w(v)$.

The input of the \sofaclapfeas problem is a complete layer graph $G$ with layers $0, \dots, \lambda$ together with a pair of \emph{lower} and \emph{upper capacity} $(\ell_i,u_i)$ for each $i \in [\lambda]$.
The problem \sofaclapfeas then asks whether $G$ has a layer tree containing all $n_0$ sinks as subgraph such that the weight of each vertex lies between the lower and upper capacity in the corresponding layer, i.e., for $v \in V_i$ it holds that $\ell_i \le w(v) \le u_i$.
We call such a layer tree \emph{valid}.

We note that an instance of \sofaclapfeas is completely determined by specifying the numbers $n_0$ and $(n_i, \ell_i, u_i)_{i \in [\lambda]}$.
Although layer $0$ has no capacities, it is sometimes convenient to assume that $\ell_0 = u_0 = 1$.

\paragraph{Multiple Sinks}

When coming from the solar farm application, it often makes sense to have not just one but multiple transformers as sinks.
In this case, the solution should not be a tree but a forest.
We note that this case is already covered by our definition of \sofaclapfeas, as one can always insert an additional top layer with just one vertex.
Thus, the requirement of having just one sink is without loss of generality.

\paragraph{Normalization}

We call an instance of \sofaclapfeas \emph{normalized} if, with increasing layer, the number of vertices is decreasing and the capacities are increasing, i.e., $n_i \le n_{i - 1}$, $\ell_i \ge \ell_{i - 1}$, and $u_i \ge u_{i - 1}$ for $i \in [\lambda]$.
It is easy to verify that any instance can be transformed into an equivalent normalized instance in $\mathcal{O}(\lambda)$.
We assume for the remainder of the paper that all instances are normalized.
\begin{lemma}
  \label{lemma:normalization}
  Any instance of \sofaclapfeas can be transformed into an equivalent normalized instance in $\mathcal{O}(\lambda)$.
\end{lemma}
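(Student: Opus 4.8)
The plan is to treat the three normalization requirements---$n_i \le n_{i-1}$, $\ell_i \ge \ell_{i-1}$, and $u_i \ge u_{i-1}$---independently, since each touches a disjoint set of parameters, and to realize each one by a single sweep over the layers that only \emph{tightens} the corresponding bounds. The guiding observation is a pair of monotonicity facts that hold in \emph{every} valid layer tree. First, the weight of a vertex is the sum of the weights of its children, hence at least the weight of any single child; combined with the fact that every internal vertex of the tree has at least one child, this gives $w(\text{parent}) \ge w(\text{child})$. Second, the number of vertices actually used on layer $i$ is at most the number used on layer $i-1$, since distinct layer-$i$ vertices have disjoint, nonempty children sets on layer $i-1$.

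First I would monotonize the vertex counts: processing layers from $i = 1$ upward, replace $n_i$ by $\min(n_i, n_{i-1})$. This is feasibility-preserving because, by the second fact, in any valid tree the number of used layer-$i$ vertices is at most the number used on layer $i-1$, which is at most $n_{i-1}$; so no valid tree ever uses more than $n_{i-1}$ vertices on layer $i$, and lowering the cap discards no solution while adding none. The sweep yields $n_0 \ge n_1 \ge \dots \ge n_\lambda$.

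Next I would monotonize the lower capacities, again sweeping upward and replacing $\ell_i$ by $\max(\ell_i, \ell_{i-1})$. Here the first fact does the work: a layer-$i$ vertex $v$ has a child $c$ on layer $i-1$ with $w(c) \ge \ell_{i-1}$, so $w(v) \ge w(c) \ge \ell_{i-1}$ already holds in every valid tree, and raising the lower bound to $\max(\ell_i, \ell_{i-1})$ leaves every valid tree valid and creates no new ones. Symmetrically, for the upper capacities I sweep \emph{downward}, replacing $u_{i-1}$ by $\min(u_{i-1}, u_i)$ for $i$ from $\lambda$ down to $1$: a non-root layer-$(i-1)$ vertex has a parent of weight at most $u_i$, so its own weight is already at most $u_i$, making the tightening harmless. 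After these sweeps the sequences satisfy $\ell_0 \le \dots \le \ell_\lambda$ and $u_0 \le \dots \le u_\lambda$, using the convention $\ell_0 = u_0 = 1$.

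Each sweep visits every layer a constant number of times with $\mathcal{O}(1)$ arithmetic per layer, so the whole transformation runs in $\mathcal{O}(\lambda)$; composing feasibility-preserving steps preserves equivalence. The main thing to get right is the justification that each tightening removes only non-solutions, which rests entirely on the two monotonicity facts, so I would state and verify those carefully first. One minor point worth remarking on is that on an infeasible instance the sweeps may produce a layer with $\ell_i > u_i$; this is consistent, as it simply makes the preserved infeasibility explicit.
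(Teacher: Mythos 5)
Your proof is correct and follows essentially the same route as the paper's: the same three sweeps (vertex counts and lower capacities upward, upper capacities downward) justified by the same two monotonicity facts about valid layer trees, namely $w(\text{parent}) \ge w(\text{child})$ and the layer-wise decrease in the number of used vertices. The only difference is presentational---you state the monotonicity facts up front and handle the three requirements as separate passes, while the paper folds them into one construction---so there is nothing substantive to add.
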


\begin{proof}
  Given an instance determined by the values $n_0$ and $(n_i, \ell_i, u_i)_{i \in [\lambda]}$, we construct an equivalent normalized instance $(\tilde{n}_i, \tilde{\ell}_i, \tilde{u}_i)_{i \in [\lambda]}$ as follows.

  For every layer $i \in [\lambda]$, set $\tilde{n}_i = \min(n_i, \tilde{n}_{i - 1})$, $\tilde{\low}_i = \max(\low_i, \tilde{\low}_{i - 1})$, and $\tilde{\high}_{i - 1} = \min(\high_{i - 1}, \tilde{\high}_{i})$.
  Clearly, the resulting instance is normalized.

  Going from the original to the normalized instance, the number of vertices in non-source layers is only decreased and the capacity intervals are only made smaller.
  Thus, if the original instance was a no-instance, it remains a no-instances after the normalization.
  Moreover, any solution of the original instance remains a solution of the normalized instance: In the solution, the number of vertices in each layer decreases with increasing layers.
  Moreover, the weight of any vertex is at least the weight of its children.
  Thus, increasing the lower bound of the parent to the lower bound of the child and decreasing the upper bound of the child to the upper bound of the parent keeps the weight in the solution between the upper and lower bound.\hfill
\end{proof}

\paragraph{Solar Farm Cabling}

With the above formalization, we can also define the \sofaclap problem studied by Gritzbach et al.\ \cite{gritzbach2022sofaclap} as follows.
As for the \sofaclapfeas problem, the input consists of a complete layer graph $G = (V, E)$ together with the capacities.
Additionally, every edge of $G$ has a \emph{length} $\len: E \to \mathbb R$ indicating the distance between the corresponding components in the solar farm\footnote{The $\len$ function is usually compactly represented by specifying geometric positions of the vertices.}.
Moreover, there is a \emph{cable cost function} $\cost\colon \mathbb N \to \mathbb R$ that indicates how expensive the required cable is for a given load, i.e., if the vertex $u$ in a layer tree has weight $w(u)$, then the edge to its parent costs $\cost(w(u))$ per unit of length.
Thus, the \emph{cost} of the edge $e = (u, v)$ is $\cost(w(u)) \cdot \len(e)$.
The goal of \sofaclap is to find a valid layer tree with minimum total edge cost.

\section{DP Algorithm for \sofaclapfeas}
\label{sec:dp-algor-sofacl}

In this section, we first describe a basic dynamic program that solves \sofaclapfeas.
We propose further optimizations in Section~\ref{sec:optim-dp-algor}.

The main idea of the DP is to obtain valid layer trees with a higher number of leaves by combining two valid layer trees with fewer leaves.
However, to obtain a solution for a given instance of \sofaclapfeas, it is not necessary to actually compute all possible trees explicitly, and much less information is needed to be able to reconstruct such a solution.
In fact, it is sufficient to know how many vertices are in each layer, regardless of the actual topology.
Thus, we consider two layer trees to be equivalent if they have the same number of vertices in each layer.
More formally, let $a \coloneqq (a_0, \ldots, a_\layer) \in \mathbb{N}^{\layer + 1}$ be a vector where the number $a_i$ describes the number of vertices in layer $i$.
We call such a vector $a$ a \emph{partial solution} with $a_0$ leaves if there exists a valid layer tree with $a_i$ vertices in layer $i \in \setz{\layer}$.
Observe that, by definition, there is a partial solution with $a_0 = n_0$ leaves if and only if there is a valid layer tree that connects all sources, i.e., if the given instance is a yes-instance.

\subsection{Ignoring Lower Capacities}

For now, we assume that the given instance of \sofaclapfeas has no lower capacities, i.e., $\low_i = 0$ for every layer $i \in \setz{\layer}$.
This simplifies the description of the dynamic program, and we will later see that it can be slightly adapted to also work with lower capacities.

Once we know all partial solutions with fewer leaves, we can combine these to obtain all partial solutions for more leaves.
Combining two partial solutions is roughly the sum of the number of vertices in each layer, with the exception that in a layer, where both solutions only have one single vertex, it is possible to merge these two vertices into one.
As such a merging of two vertices is optional, we may obtain multiple new partial solutions by combining two partial solutions.

More formally, we define the combination of two partial solutions as follows.
Given a partial solution $a = (a_0, \ldots, a_\layer)$, we refer to the highest layer $i$ with $a_i > 1$ as the \emph{branching layer} of $a$.
Observe that in a corresponding tree, the branching layer consists of multiple vertices, and the tree forms a path above the branching layer.

Let $a = (a_0, \ldots, a_\layer)$ and $b = (b_0, \ldots, b_\layer)$ be partial solutions, and let $k_a$ and $k_b$ be the branching layer of $a$ and $b$, respectively.
For some $k \in \set{\max\{k_a, k_b\}, \ldots, \layer}$, the \emph{$k$-combination} of $a$ and $b$ is the vector $(a_0 + b_0, \ldots, a_{k} + b_{k}, 1, \ldots, 1)$.
Note that the branching layer of a $k$-combination is layer $k$ by definition.

Observe that combining partial solutions has a correspondence on trees.
Let $T_a$ and $T_b$ be layer trees with branching layers $k_a$ and $k_b$, respectively.
For some $k \in \set{\max\set{k_a, k_b}, \ldots, \layer}$, we define the $k$-combination of $T_a$ and $T_b$ as follows.
We take the disjoint union of $T_a$ and $T_b$, and for every $i \in \set{k + 1, \ldots, \layer}$, we contract the two vertices in layer $i$ into one.
Figure~\ref{fig:combine} shows such a $k$-combination.
\begin{figure}
  \centering
  \includegraphics{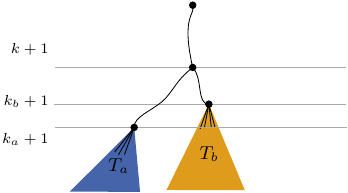}
  \caption{The $k$-combination of two trees $T_a$ and $T_b$ with branching layer $k_a$ and $k_b$, respectively.}
  \label{fig:combine}
\end{figure}
For $k$-combinations of layer trees, we observe that the weight of vertices only change in layers above $k$.

\begin{observation}
  \label{observation:equal-load}
  Let $T$ be the $k$-combination of two layer trees $T_a$ and $T_b$.
  For every vertex in some layer at most $k$ in $T$, the weight is equal to its corresponding vertex in $T_a$ or $T_b$.
\end{observation}
With this, we can show the following lemma.
It essentially states that it can be easily checked which combinations of partial solutions are partial solutions themselves.
\begin{lemma}
  \label{lemma:dp-correct}
  Let $a$ and $b$ be partial solutions with $a_0$ and $b_0$ leaves, respectively.
  The $k$-combination $c = (c_0, \ldots, c_\layer)$ of $a$ and $b$ for some $k$ yields a partial solution with $a_0 + b_0$ leaves if and only if
  \begin{itemize}
    \item $c_i$ is at most $n_i$ for every $i \in \setz{\layer}$, and
    \item $c_0 \leq \high_{k + 1}$.
  \end{itemize}
\end{lemma}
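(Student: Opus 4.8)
The plan is to prove both directions of the equivalence, leaning on the combinatorial correspondence between partial solutions and actual layer trees, and on Observation~\ref{observation:equal-load} to control how weights change under a $k$-combination.

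First I would fix witnessing trees. Since $a$ and $b$ are partial solutions, there exist valid layer trees $T_a$ and $T_b$ with $a_i$ and $b_i$ vertices in layer $i$, respectively. For a fixed choice of $k \in \set{\max\set{k_a, k_b}, \ldots, \layer}$, I form the $k$-combination $T = T_a \combine T_b$ as defined in the excerpt: take the disjoint union and contract the two vertices in each layer $i > k$ into one. By construction, $T$ has exactly $c_i$ vertices in layer $i$, so $c$ is the vector associated with $T$, and the number of leaves is $a_0 + b_0 = c_0$. Thus, the whole statement reduces to: $T$ is a \emph{valid} layer tree if and only if the two displayed conditions hold.

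For the forward direction (the "only if"), suppose $c$ is a partial solution. Then there is \emph{some} valid layer tree realizing the vector $c$; in particular each $c_i \le n_i$ because a valid tree must fit in the complete layer graph $G$, which has only $n_i$ vertices in layer $i$. For the bound $c_0 \le \high_{k+1}$, I would argue directly on $T$: the branching layer of $c$ is layer $k$, so above layer $k$ the tree is a single path, and the unique vertex in layer $k+1$ has all $c_0 = a_0 + b_0$ leaves below it; validity forces its weight to be at most $\high_{k+1}$, giving the inequality. (If $k = \layer$ this condition is vacuous or read with the convention on the top layer, which I would note explicitly.)

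For the converse (the "if"), I would verify that $T$ is valid by checking the capacity constraints layer by layer, using that we are in the lower-capacity-free regime so only upper bounds $\high_i$ matter. For layers $i \le k$, Observation~\ref{observation:equal-load} says every vertex has the same weight as its counterpart in $T_a$ or $T_b$, which was already within the allowed range by validity of $T_a$ and $T_b$; so these layers are automatically fine. The work is concentrated in the layers above $k$: here $T$ is a path whose vertices all carry weight $c_0$, so the single condition $c_0 \le \high_{k+1}$ together with normalization (the upper capacities are nondecreasing, i.e.\ $\high_{k+1} \le \high_{k+2} \le \cdots \le \high_\layer$) yields $c_0 \le \high_i$ for every $i > k$. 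Combined with $c_i \le n_i$ for all $i$, this shows every vertex of $T$ respects its layer's upper capacity and that $T$ fits in $G$, so $T$ is valid and $c$ is a partial solution with $c_0$ leaves.

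I expect the main obstacle to be cleanly handling the layers strictly above the branching layer $k$: one must observe that a single threshold check at layer $k+1$ suffices for \emph{all} higher layers, which is exactly where the normalization assumption (monotone upper capacities) does the heavy lifting. The remaining care is bookkeeping around boundary cases ($k = \layer$, and the role of $\high_{k+1}$ when the branching layer is the top layer), which I would dispatch with the stated conventions rather than separate arguments.
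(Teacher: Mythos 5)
Your proof is correct and follows essentially the same route as the paper's: witnessing trees $T_a$, $T_b$, the $k$-combination, Observation~\ref{observation:equal-load} for layers at most $k$, and normalization of the upper capacities for the layers above $k$. Your forward direction is in fact slightly more explicit than the paper's (which dispatches it in two sentences), and your flagged boundary case $k=\layer$ is glossed over by the paper as well.
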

\begin{proof}
  Let $a = (a_0, \ldots, a_\layer)$ and $b = (b_0, \ldots, b_\layer)$.
  As $c$ is their $k$-combination, it holds that $c_i = a_i + b_i$ if $i \leq k$ and $c_i = 1$ if $i > k$.

  If $c_i > n_i$, then $c$ is not a partial solution by definition.
  The same is true if there is a layer that consists of a single vertex whose upper capacity is not large enough to connect all leaves.

  Now, assume $c$ satisfies the conditions from the lemma, i.e., $c_i \leq n_i$ for all $i \in \setz{\layer}$ and $c_0 \leq \high_{k + 1}$.
  We prove that $c$ is a partial solution.
  As $a$ and $b$ are partial solutions, there are two valid layer trees $T_a$ and $T_b$ that correspond to $a$ and $b$, respectively.
  Consider the $k$-combination $T_c$ of $T_a$ and $T_b$.
  By definition, $T_c$ corresponds to the vector $c$, and each layer $i \in \setz{\layer}$ in $T_c$ consists of at mosts $n_i$ vertices.
  For the capacity constraint, we consider the layers at most $k$ and the layers above $k$ separately.
  In the layers at most $k$, the weight of a vertex in $T_c$ is equal to the weight of its corresponding vertex in $T_a$ or $T_b$ by Observation~\ref{observation:equal-load}.
  Since $T_a$ and $T_b$ are valid, the capacity constraints in the layers at most $k$ in $T_c$ are satisfied.
  In every layer above $k$, $T_c$ consists of a single vertex and all leaves in $T_c$ are connected to this vertex.
  As we assumed that the upper capacity of a higher layers is at least the upper capacity of a lower layer, the capacity constraint is satsified for all layers above $k$.
  Thus, $T_c$ is a valid layer tree and by that, $c$ a partial solution which concludes the proof.\hfill
\end{proof}
Conversely, we prove that we can obtain any partial solution by combining partial solutions for fewer leaves.
\begin{lemma}
  \label{lemma:dp-complete}
  Let $c \in \mathbb{N}^{\layer + 1}$ be a partial solution with $c_0 > 1$ leaves.
  There exist two partial solutions $a, b \in \mathbb{N}^{\layer + 1}$ with $a_0$ and $b_0$ leaves, respectively, with $a_0 < c_0$ and $b_0 < c_0$ such that $c$ is the $k$-combination of $a$ and $b$ for some $k$.
\end{lemma}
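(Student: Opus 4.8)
The plan is to prove completeness constructively: starting from a valid layer tree $T_c$ that realizes the partial solution $c$, I would cut $T_c$ into two pieces at its branching layer and show that both pieces are themselves partial solutions whose $k$-combination is exactly $c$. First, let $k$ be the branching layer of $c$, i.e., the highest layer with $c_k > 1$. Since $c_0 > 1$ this layer exists, and since $T_c$ has a unique root we have $k < \layer$ and $c_i = 1$ for all $i > k$. In $T_c$, layer $k$ therefore consists of $c_k \ge 2$ vertices and the tree is a single path above layer $k$. Removing this path leaves a forest of $c_k$ subtrees, each rooted at a layer-$k$ vertex and each spanning all layers $\setz{k}$; I would partition these subtrees into two nonempty groups $S_a$ and $S_b$.

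From $S_a$ I build a tree $T_a$ by introducing fresh vertices $v_{k+1}, \ldots, v_\layer$, making the roots of the subtrees in $S_a$ the children of $v_{k+1}$ and chaining $v_{k+1} \to v_{k+2} \to \cdots \to v_\layer$; the tree $T_b$ is built analogously from $S_b$. Let $a$ and $b$ be the associated vectors. The key step is to verify that $a$ and $b$ are partial solutions. For the layers in $\setz{k}$, the subtrees in $T_a$ (resp.\ $T_b$) are literally the subtrees of $T_c$, so every vertex keeps the same weight it had in $T_c$ (the same observation underlying Observation~\ref{observation:equal-load}); as $T_c$ is valid, the upper-capacity constraints hold there, and the counts satisfy $a_i, b_i \le c_i \le n_i$ with $a_i, b_i \ge 1$ since $S_a, S_b$ are nonempty. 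For each layer $i > k$, the tree $T_a$ has the single vertex $v_i$ of weight $a_0$; because $a_0 \le c_0$ and the corresponding single vertex of $T_c$ already has weight $c_0 \le \high_i$ by validity of $T_c$, we get $a_0 \le \high_i$, and the count is $1 \le n_i$. The same bounds hold for $T_b$, so both are valid layer trees and $a, b$ are partial solutions.

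Finally, I would check the combination identity and the strict-decrease requirement. Since $S_a$ and $S_b$ partition the subtrees at layer $k$, we have $a_i + b_i = c_i$ for every $i \le k$, while $a_i = b_i = c_i = 1$ for $i > k$; moreover both branching layers are at most $k$ by construction, so $c$ is precisely the $k$-combination of $a$ and $b$. As $S_a$ and $S_b$ are nonempty, $a_0, b_0 \ge 1$, and $a_0 + b_0 = c_0 > 1$ forces $a_0 < c_0$ and $b_0 < c_0$, as required. I do not expect a genuinely hard obstacle here: the only point that needs care is identifying the branching layer as the correct cut (so that the reconstructed upper path is consistent with a valid $k$-combination) and handling the upper-layer capacity bounds, which follow cleanly once one notes that the split never touches the subtrees below layer $k$ and that each rebuilt upper vertex inherits a weight of at most $c_0$.
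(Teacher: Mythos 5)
Your proposal is correct and follows essentially the same route as the paper: cut the valid tree $T_c$ at its branching layer $k$ into two nonempty pieces, observe that the weights below layer $k$ are untouched and the rebuilt upper path inherits weight at most $c_0$, and conclude that $c$ is the $k$-combination of the two resulting partial solutions. The only cosmetic difference is that the paper splits off a single layer-$k$ subtree from the rest rather than allowing an arbitrary nonempty bipartition, and it compresses the capacity verification into the remark that both pieces are subtrees of $T_c$.
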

\begin{proof}
  Let $c = (c_0, \ldots, c_\layer)$ be a partial solution for more than one leaf, and $T_c$ a corresponding valid layer tree.
  Further, let $k$ be the branching layer of $c$.
  We construct two layer trees $T_a$ and $T_b$ whose $k$-combination gives us $T_c$ as follows.
  Let $u$ be the vertex in layer $k + 1$ of $T_c$.
  As it is above the branching layer, $u$ has at least two children.
  Let $v$ be a child of $u$ in layer $k$.
  We obtain the layer tree $T_a$ by deleting $v$ and the subtree rooted in $v$ from $T_c$.
  Similarly, $T_b$ is constructed by deleting all children of $u$ except for $v$ and their subtrees in $T_c$.
  An example can be seen in Figure~\ref{fig:partition-tree}.
  \begin{figure}
    \centering
    \includegraphics{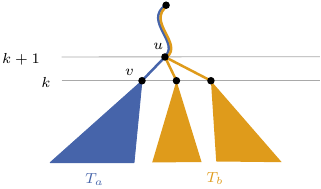}
    \caption{Any valid tree $T_c$ is a $k$-combination of two valid trees $T_a$ and $T_b$.}
    \label{fig:partition-tree}
  \end{figure}

  Observe that $c$ is the $k$-combination of $a$ and $b$ where $a$ and $b$ are the corresponding vectors to $T_a$ and $T_b$, respectively.
  As both $T_a$ and $T_b$ are subtrees of $T_c$, they both have at most as many vertices as $T_c$ in each layer, and the weight of each vertex is at most the weight of its corresponding vertex in $T_c$.
  Thus, we can obtain $c$ by combining two partial solutions $a$ and $b$.\hfill
\end{proof}

We can now formulate the dynamic program.
For one leaf, there is only one partial solution $(1, \ldots, 1)$.
For every $i \in \set{2, \ldots, n_0}$, we obtain all partial solutions with $i$ leaves by combining all pairs of partial solutions where the total number of leaves add up to $i$, and keeping only vectors that satisfy the conditions in Lemma~\ref{lemma:dp-correct}.
Lemma~\ref{lemma:dp-correct} states that we only get partial solutions this way, and Lemma~\ref{lemma:dp-complete} proves that any partial solution is generated.
Thus, the correctness of this DP follows directly from the previous two lemmas.

\subsection{Adding Lower Capacity Constraints}
By additionally introducing a lower capacity $\low_i$ for each layer, we need to slightly change the dynamic program.
While a tree that violates the upper capacity constraint cannot be made valid by combining it with another tree, a violation of the lower capacity constraint may be fixed for vertices above the branching layer since the weight of these may increase.
Thus, we are also interested in layer trees where the lower capacity is relaxed for vertices above the branching layer.

Formally, we say that a layer tree $T$ with branching layer $k$ is \emph{almost valid} if it is valid in each layer $i \in [k]$, and if the number of leaves does not exceed the upper capacity of layer $i \in \set{k + 1, \ldots, \layer}$.
A vector is a \emph{relaxed partial solution} if it corresponds to an almost-valid tree.

In a similar fashion as in the case without lower capacity constraints, we show that by combining relaxed partial solutions for fewer leaves, we can obtain all relaxed partial solutions.

\begin{lemma}
  \label{lemma:dp-min-cap-correct}
  Let $a$ and $b$ be relaxed partial solutions with $a_0$ and $b_0$ leaves, respectively.
  The $k$-combination $c = (c_0, \ldots, c_\layer)$ of $a$ and $b$ for some $k$ is a relaxed partial solution with $a_0 + b_0$ leaves if
  \begin{itemize}
    \setlength\itemsep{-1pt}
    \item $c_i$ is at most $n_i$ for every $i \in \setz{\layer}$,
    \item $c_0 \leq \high_{k + 1}$, and
    \item $a_0 \geq \low_k$, and $b_0 \geq \low_k$.
  \end{itemize}
\end{lemma}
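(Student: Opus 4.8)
The plan is to mirror the sufficiency direction of the proof of Lemma~\ref{lemma:dp-correct}, but now keeping track of the lower capacities as well. Since $a$ and $b$ are relaxed partial solutions, I would begin by fixing almost-valid layer trees $T_a$ and $T_b$ corresponding to them, with branching layers $k_a$ and $k_b$, and form their $k$-combination $T_c$ for the given $k \in \set{\max\set{k_a, k_b}, \ldots, \layer}$. By definition $T_c$ corresponds to $c$ and its branching layer is $k$. The size condition $c_i \le n_i$ directly bounds the number of vertices in each layer, so the remaining task is to verify that $T_c$ is \emph{almost valid}: fully valid in the layers $i \in [k]$ and within the upper capacity in the layers $i > k$.

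The layers above $k$ are the easy case. There $T_c$ is a single path, every vertex carries all $c_0 = a_0 + b_0$ leaves, and by the normalization assumption $c_0 \le \high_{k + 1} \le \high_i$ for every such $i$, so the upper capacities hold while the lower ones are relaxed by the definition of almost valid. For the upper capacities in the layers $i \le k$, I would invoke Observation~\ref{observation:equal-load}: the weight of each such vertex equals that of its preimage in $T_a$ or $T_b$, and those trees already respect every upper capacity (below their own branching layer by validity, above it because the weight of the single vertex there is at most the corresponding upper capacity). Hence the upper capacities carry over to $T_c$ unchanged.

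The genuinely new ingredient, and the step I expect to be the main obstacle, is the lower capacity in the layers $i \le k$. Again Observation~\ref{observation:equal-load} lets me read off the weight of a vertex in such a layer of $T_c$ from its preimage. For a vertex sitting at or below the branching layer of its source tree (layer $i \le k_a$ in $T_a$, or $i \le k_b$ in $T_b$) the lower capacity is inherited from the validity of $T_a$ or $T_b$. The delicate vertices are those on the \emph{path portion} of a source tree that now lies below the new branching layer $k$, i.e.\ the single vertex in a layer $i$ with $k_a < i \le k$ of $T_a$ (and symmetrically for $T_b$); such a vertex has weight exactly $a_0$, and within $T_a$ alone its lower capacity may well have been violated. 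Here the extra hypotheses $a_0 \ge \low_k$ and $b_0 \ge \low_k$ do the work: since the instance is normalized we have $\low_i \le \low_k$ for all $i \le k$, whence $a_0 \ge \low_k \ge \low_i$ (respectively $b_0 \ge \low_i$), so the lower capacity is met.

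Collecting these observations shows that $T_c$ is almost valid, so $c$ is a relaxed partial solution with $a_0 + b_0$ leaves, which is exactly the claim. I note that, unlike Lemma~\ref{lemma:dp-correct}, only the \emph{if} direction is asserted here, so no converse needs to be argued; it is precisely the monotonicity of the capacities supplied by normalization that turns the three stated conditions into a sufficient test.
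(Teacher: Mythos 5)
Your proposal is correct and follows essentially the same route as the paper's proof: form the $k$-combination $T_c$ of almost-valid trees $T_a$ and $T_b$, handle the layers above $k$ via $c_0 \le \high_{k+1}$ and normalization, transfer weights in layers at most $k$ via Observation~\ref{observation:equal-load}, and use $a_0, b_0 \ge \low_k$ together with the monotonicity of the lower capacities to cover the path portions between the old branching layers and $k$. No gaps.
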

\begin{proof}
  Let $a = (a_0, \ldots, a_\layer)$ and $b = (b_0, \ldots, b_\layer)$ be relaxed partial solutions, and let $c$ be the $k$-combination of $a$ and $b$ and properties as stated in the lemma.
  Moreover, let $T_a$ and $T_b$ be almost-valid trees that correspond to $a$ and $b$ with branching layers $k_a \leq k$ and $k_b \leq k$, respectively.

  We prove that $c$ is a partial relaxed solution if the conditions from the lemma are satisfied.
  Let $T_c$ be the $k$-combination of $T_a$ and $T_b$.
  By definition, $T_c$ corresponds to $c$, and each layer $i \in \setz{\layer}$ in $T_c$ consists of at most $n_i$ vertices.
  Further, the weight of the vertex in $T_c$ in the layer directly above the branching layer does not exceed their upper capacities by assumption.
  Since we assume that the instance is normalized, the capacity of a higher layer is at least the capacity of a lower layer.
  Thus, the upper capacity constraint is satisfied for all layers above the branching layers.

  Recall that the weights of vertices in the layers at most $k$ do not change in $T_c$ compared to the corresponding vertices in $T_a$ and $T_b$ by Observation~\ref{observation:equal-load}.
  Thus, it is sufficient to show that each vertex in the layers at most $k$ in $T_a$ and $T_b$ satisfy all capacity constraints.
  As $T_a$ and $T_b$ are almost valid, upper capacity constraints are satisfied in all layers in both trees, and lower capacity constraints are satisfied in the layers at most $k_a$ and $k_b$ respectively.
  In each layer $i \in \set{k_a + 1, \ldots, k}$, $T_a$ only has one vertex to which all leaves are connected.
  As we assume that we have a normalized instance, the lower capacity of a lower layer is at most the lower capacity of a higher layer.
  Since $a_0 \geq \low_k$, the lower capacity constraint is also satisfied in $T_a$ for the layers above its branching layer and at most $k$.
  The symmetrical result holds for $T_b$.

  Thus, $T_c$ is an almost-valid tree, and $c$ a partial relaxed solution.\hfill
\end{proof}

On the other hand, we prove that every relaxed partial solution can be obtained by combining two relaxed partial solutions for fewer leaves.
Observe that the proof of this lemma mostly follows the proof for Lemma~\ref{lemma:dp-complete}.

\begin{lemma}
  \label{lemma:dp-min-cap-complete}
  Let $c \in \mathbb{N}^{\layer + 1}$ be a relaxed partial solution with $c_0 > 2$ leaves with branching layer $k$.
  There exist two relaxed partial solutions $a, b \in \mathbb{N}^{\layer + 1}$ with $a_0 < c_0$ and $b_0 < c_0$ leaves such that
  \begin{itemize}
    \setlength\itemsep{-1pt}
    \item it is $a_0 \geq \low_k$, and $b_0 \geq \low_k$.
    \item $c$ is the $k$-combination of $a$ and $b$.
  \end{itemize}
\end{lemma}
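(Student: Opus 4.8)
The plan is to reuse, essentially verbatim, the tree‑splitting construction from the proof of Lemma~\ref{lemma:dp-complete} and then add two short capacity checks. Concretely, I would fix an almost‑valid tree $T_c$ corresponding to $c$ with branching layer $k$, and let $u$ be the vertex of $T_c$ in layer $k+1$; it is the unique such vertex because $k$ is the branching layer, so every layer above $k$ consists of a single vertex. Since layer $k$ contains $c_k > 1$ vertices and each of them has $u$ as its only possible parent, $u$ has at least two children. I would pick one such child $v$ in layer $k$, let $T_b$ be the subtree rooted at $v$ together with the path from $u$ to the root, and let $T_a$ be $T_c$ with $v$ and its subtree deleted. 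Writing $a$ and $b$ for the vectors of $T_a$ and $T_b$, the vector $c$ is by construction the $k$-combination of $a$ and $b$, and $a_0 + b_0 = c_0$ with both summands positive, so $a_0 < c_0$ and $b_0 < c_0$.

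Next I would argue that $T_a$ and $T_b$ are almost valid, i.e.\ that $a$ and $b$ are relaxed partial solutions. Both are subtrees of $T_c$, so in every layer they have at most $n_i$ vertices and every retained vertex has weight at most its weight in $T_c$. For $T_b$, all vertices in layers at most its branching layer lie inside the subtree of $v$ and keep exactly the weight they had in $T_c$; since $T_c$ is valid through layer $k$, these vertices meet both capacity bounds, so the required lower capacities hold. In every layer $i$ above the branching layer of $T_b$ the tree is a single vertex carrying all $b_0$ leaves: for $i \le k$ that weight equals the weight of the corresponding vertex of $T_c$ (the unique vertex of $v$'s subtree in layer $i$), which is at most $\high_i$ by validity of $T_c$, and for $i > k$ we simply have $b_0 \le c_0 \le \high_i$ since $T_c$ is almost valid. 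The identical reasoning applies to $T_a$, including the case where deleting $v$ leaves $u$ with a single child and thereby drops the branching layer of $T_a$ below $k$; the newly single layers still carry weight $a_0$, which was already bounded by $\high_i$ in $T_c$. Notably, this direction needs no appeal to normalization, because on the upper parts we bound everything by $b_0, a_0 \le c_0$.

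Finally I would verify the two new conditions $a_0 \ge \low_k$ and $b_0 \ge \low_k$, which is the step I expect to be the crux. Because $T_c$ is valid in layer $k$, every child of $u$—in particular $v$—has weight at least $\low_k$, giving $b_0 = w(v) \ge \low_k$. For $a_0$ the point is that $u$ has at least two children, so after deleting $v$ at least one child of $u$ survives in $T_a$; that surviving child already has weight at least $\low_k$ in $T_c$, whence $a_0 \ge \low_k$ as well. This is the only genuine departure from the proof of Lemma~\ref{lemma:dp-complete}: peeling off a single child is enough precisely because each individual layer‑$k$ vertex already meets the lower bound, so neither side can fall below $\low_k$. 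I would close by observing that $c_k > 1$ (from $k$ being the branching layer) supplies the two children the construction needs, while the hypothesis $c_0 > 2$ ensures that both $T_a$ and $T_b$ remain nonempty and strictly smaller than $c$.
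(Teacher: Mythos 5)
Your proposal is correct and follows essentially the same route as the paper: split off the subtree of one child $v$ of the unique layer-$(k{+}1)$ vertex to form $T_b$ (keeping the path to the root) and let $T_a$ be the rest, then check almost-validity layer by layer and derive $a_0, b_0 \geq \low_k$ from the fact that each layer-$k$ vertex of $T_c$ already meets the lower capacity. Your write-up is somewhat more detailed than the paper's (e.g., the explicit observation that no normalization is needed here), but the construction and all key steps coincide.
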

\begin{proof}
  Let $c = (c_0, \ldots, c_\layer)$ be a relaxed partial solution, and let $T_c$ be a corresponding almost-valid layer tree.
  We construct two layer trees $T_a$ and $T_b$ as follows.
  Let $u$ be a vertex in layer $k + 1$ of $T_c$.
  As it is above the branching layer, $u$ has at least two children.
  Let $v$ be a child of $u$ in layer $k$.
  We obtain the layer tree $T_a$ by deleting $v$ and the subtree rooted in $v$ from $T_c$.
  Similarly, $T_b$ is constructed by deleting all children of $u$ except for $v$ and their subtrees in $T_c$.
  By construction, $T_c$ is the $k$-combination of $T_a$ and $T_b$.

  We prove that both $T_a$ and $T_b$ are almost-valid layer trees.
  Each layer in $T_a$ and $T_b$ consists of at most as many vertices as the corresponding layer in $T_c$.
  Since $k$ is the branching layer of $T_c$, vertices in layers at most $k$ in $T_c$ satisfy all capacity constraints.
  The same holds for vertices in layers at most $k$ in $T_a$ and $T_b$ since the weight of such a vertex is equal to the weight of its corresponding vertex in $T_c$.
  Since it is $k \geq k_a$ and $k \geq k_b$, $T_a$ and $T_b$ are almost-valid layer trees.
  Moreover, the lower capacity constraint is also satisfied in layer $k$ in both $T_a$ and $T_b$.\hfill
\end{proof}
If an instance also contains lower capacity constraints, we modify the previously proposed dynamic program as follows.
Instead of only storing partial solutions, we store all relaxed partial solutions.
New relaxed partial solutions for $i \in \set{2, \ldots, n_0}$ leaves are obtained by combining all pairs of relaxed partial solutions where the total number of leaves add up to $i$.
Additionally, we only keep the generated solutions that satisfy the conditions stated in Lemma~\ref{lemma:dp-min-cap-correct}, which tells us that only relaxed partial solutions are generated in this case.
Lemma~\ref{lemma:dp-min-cap-complete} proves that any relaxed partial solution is generated this way.

If the lower capacity of some layer is higher than the total number of leaves to connect, then the instance trivially is a no-instance.
Assuming there is no such layer, each relaxed partial solution with $n_0$ leaves is a partial solution as well.

\subsection{Complexity}
The running time of the algorithm heavily depends on the number of relaxed partial solutions that are stored for each number of leaves.
We denote the number of relaxed partial solutions for $c \in [n_0]$ leaves by $S_c$.
Further, let $S_{\max}$ be the maximum over all $S_c$, i.e., for each number of leaves, we store at most $S_{\max}$ relaxed partial solutions.

To generate the set of relaxed partial solutions with $c \in [n_0]$ leaves, we consider all pairs of relaxed partial solutions with $a$ and $b$ leaves where $a + b = c$.
For each pair of relaxed partial solutions, there are at most $\layer$ possible $k$-combinations.
Thus, we have $$\sum_{\substack{a + b = c \\ a,b \leq c}} \layer \cdot S_a \cdot S_b$$ combinations that need to be considered for $c$ leaves.
These combinations need to be filtered in accordance to Lemma~\ref{lemma:dp-min-cap-correct} which can be naively done in $\mathcal{O}(\layer)$ time for each combination along with the combining.
Further, we use an ordered set for each number of leaves to make sure each relaxed partial solution is only stored once.

This gives us the running time
\begin{align*}
  &\mathcal{O}\biggl(\sum_{c = 2}^{n_0} (\log(S_c) + \layer) \cdot  \sum_{\substack{a + b = c \\ a,b \leq c}} \layer \cdot S_a \cdot S_b \biggr) \\
  &\subseteq \mathcal{O}\biggl(\sum_{c = 2}^{n_0} (\layer \cdot \log(S_{\max}) + \layer^2) \cdot  \sum_{\substack{a + b = c \\ a,b \leq c}} S_{\max}^2 \biggr) \\
  & \subseteq \mathcal{O}\biggl(S_{\max}^2 \cdot (\layer \cdot \log(S_{\max}) + \layer^2) \cdot \sum_{c = 2}^{n_0} c \biggr) \\
  & \subseteq \mathcal{O}\biggl(n_0^2 \cdot S_{\max}^2 \cdot (\layer \cdot \log(S_{\max}) + \layer^2) \biggr) \, .
\end{align*}

Since at most $n_0$ vertices are needed in each layer, we get a trivial bound of $S_{\max} \leq n_0^\layer$, which yields the following theorem.

\begin{theorem}
  The \sofaclapfeas problem with $n_0$ leaves and $\lambda$ layers can be solved in $\mathcal{O}\bigl(n_0^{2 \layer + 2} \cdot \layer^2 \cdot \log n_0 \bigr)$ time.
\end{theorem}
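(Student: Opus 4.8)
The plan is to obtain the stated bound purely by substituting the trivial estimate $S_{\max} \le n_0^\layer$ into the running time already established for the dynamic program, namely $\mathcal{O}\bigl(n_0^2 \cdot S_{\max}^2 \cdot (\layer \cdot \log(S_{\max}) + \layer^2)\bigr)$. Correctness of the algorithm is already guaranteed by Lemmas~\ref{lemma:dp-min-cap-correct} and~\ref{lemma:dp-min-cap-complete}, together with the closing remark that a relaxed partial solution with $n_0$ leaves is a genuine partial solution, so nothing further is required there. The theorem is therefore entirely a matter of bounding $S_{\max}$ and simplifying the resulting expression.

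First I would justify the bound $S_{\max} \le n_0^\layer$. A relaxed partial solution for a fixed number $c \le n_0$ of leaves is a vector $(c, a_1, \ldots, a_\layer)$ whose first coordinate is fixed; since the weights of the vertices in any single layer of a layer tree partition its leaves, each of the $\layer$ non-source layers of an almost-valid tree contains at most $c \le n_0$ vertices. Hence each coordinate $a_1, \ldots, a_\layer$ ranges over at most $n_0$ values, giving at most $n_0^\layer$ distinct vectors for any fixed leaf count, and thus $S_{\max} \le n_0^\layer$.

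Next I would plug this estimate into the two places where $S_{\max}$ occurs. Squaring gives $n_0^2 \cdot S_{\max}^2 \le n_0^2 \cdot n_0^{2\layer} = n_0^{2\layer + 2}$, which supplies the polynomial factor. For the logarithmic term, $\log(S_{\max}) \le \layer \log n_0$, so $\layer \cdot \log(S_{\max}) + \layer^2 \le \layer^2 \log n_0 + \layer^2$, which is $\mathcal{O}(\layer^2 \log n_0)$ since $\log n_0 \ge 1$. Multiplying the two estimates yields exactly $\mathcal{O}\bigl(n_0^{2\layer + 2} \cdot \layer^2 \cdot \log n_0\bigr)$, as claimed.

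There is essentially no serious obstacle here, as the heavy lifting is done by the preceding correctness lemmas and the running-time recurrence; the only point requiring a little care is the treatment of the additive $\layer^2$ term, which must be absorbed into $\layer^2 \log n_0$ via $\log n_0 \ge 1$ (valid for $n_0 \ge 2$, the case $n_0 = 1$ being trivial) rather than dropped carelessly. I would close by remarking that the bound $S_{\max} \le n_0^\layer$ is deliberately crude and that all slack in the running time sits here, foreshadowing the optimizations of Section~\ref{sec:optim-dp-algor} which shrink $S_{\max}$ substantially in practice.
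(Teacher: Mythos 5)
Your proposal is correct and matches the paper's own argument exactly: the paper derives the theorem by noting that each layer uses at most $n_0$ vertices, hence $S_{\max} \le n_0^\layer$, and substituting this into the previously established bound $\mathcal{O}\bigl(n_0^2 \cdot S_{\max}^2 \cdot (\layer \cdot \log(S_{\max}) + \layer^2)\bigr)$. Your additional care with the $\layer^2$ additive term and the justification of the bound on $S_{\max}$ is sound and only makes explicit what the paper leaves implicit.
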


If we assume $\layer$ to be a fixed constant, then the running time of the proposed algorithm is polynomial in $n_0$, the number of sources to connect.
However, as already mentioned in the introduction, instances of \sofaclapfeas can be compactly represented by encoding the numbers in binary.
Thus, the DP is a pseudo-polynomial algorithm for \sofaclapfeas, assuming $\layer$ is constant.
This partially confirms the conjecture by Gritzbach et al.\ \cite[Conjecture 1]{gritzbach2022sofaclap}.

\section{Optimizations for the DP Algorithm}
\label{sec:optim-dp-algor}

We propose two types of optimizations for our DP.
First, we show that some partial solutions can be ignored without violating the correctness of the algorithm.
Secondly, one can sometimes stop the DP early by completing one of the partial solutions to a full solution.

\subsection{Reducing Stored Solutions}

We provide three ways to reduce the set of partial solutions that needs to be computed.
First, we show that it suffices to consider Pareto-optimal solutions, i.e., it is always better to connect the same number of leaves with fewer vertices in higher levels.
Secondly, we provide a way of computing upper bounds on how many sources can be at most connected when using a given partial solution.
If this upper bound is below the number of sources we need to connect, we can prune the partial solution.
Thirdly, we prove that any partial solution can be assumed to be balanced in the sense that it is composed of two smaller partial solutions with roughly the same number of leaves.

\subsubsection{Pareto-Optimal Solutions}

Consider two different relaxed partial solutions $a = (a_0, \ldots, a_\layer)$ and $b = (b_0, b_1, \ldots, b_\layer)$ with the same number of leaves $a_0 = b_0$.
We say that $a$ \emph{dominates} $b$ if $a_i \le b_i$ for all $i \in [\layer]$.
We say that a relaxed partial solution is \emph{Pareto optimal} if it is not dominated by any relaxed partial solution.

Recall that for the correctness of our dynamic program, we showed in Lemma~\ref{lemma:dp-min-cap-complete} that we can obtain any relaxed partial solution as the $k$-combination of two partial solutions with fewer leaves.
The next lemma strengthens this result by stating the same result for Pareto-optimal partial solutions.
This directly implies that restricting the dynamic program to only include Pareto-optimal partial solutions is correct.

\begin{lemma}
  \label{lemma:Pareto-dp-correct}
  Let $c \in \mathbb{N}^{\layer + 1}$ be a Pareto-optimal relaxed partial solution for $c_0 > 2$ leaves with branching layer $k$.
  There exist two Pareto-optimal relaxed partial solutions $a, b \in \mathbb{N}^{\layer + 1}$ for $a_0 < c_0$ and $b_0 < c_0$ leaves such that
  \begin{itemize}
    \item it is $a_0 \geq \low_k$ and $b_0 \geq \low_k$, and
    \item $c$ is the $k$-combination of $a$ and $b$.
  \end{itemize}
\end{lemma}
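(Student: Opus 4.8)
```latex
The plan is to start from the decomposition already produced by
Lemma~\ref{lemma:dp-min-cap-complete} and then repair it so that the two
pieces become Pareto optimal, while preserving the $k$-combination
structure. First I would invoke Lemma~\ref{lemma:dp-min-cap-complete} to
obtain relaxed partial solutions $a'$ and $b'$ with $a'_0, b'_0 < c_0$,
with $a'_0 \geq \low_k$ and $b'_0 \geq \low_k$, such that $c$ is the
$k$-combination of $a'$ and $b'$. These need not be Pareto optimal, so the
goal is to replace each by a dominating Pareto-optimal solution with the
same number of leaves and check that the $k$-combination of the
replacements still equals $c$.

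\emph{Key step (the substitution).} If $a'$ is not Pareto optimal, there is
a Pareto-optimal relaxed partial solution $a$ with $a_0 = a'_0$ and
$a_i \leq a'_i$ for all $i \in [\layer]$; such an $a$ exists because
domination is a partial order on a finite nonempty set (there is at least
one relaxed partial solution for $a'_0$ leaves, namely $a'$ itself), so a
minimal element above $a'$ can be chosen, and by maximality of the chain it
is Pareto optimal. I would do the same for $b'$, obtaining a Pareto-optimal
$b$ with $b_0 = b'_0$ and $b_i \le b'_i$ for all $i$. The leaf counts are
unchanged, so $a_0 \ge \low_k$ and $b_0 \ge \low_k$ still hold, and
$a_0 < c_0$, $b_0 < c_0$ are preserved.

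\emph{Verifying the combination is still $c$.} This is where care is
needed. The branching layer of $a$ is at most that of $a'$ since $a$ is
componentwise no larger, so both $a$ and $b$ still have branching layer at
most $k$, and the $k$-combination of $a$ and $b$ is well defined. For
layers $i > k$ the $k$-combination sets $c_i = 1$ regardless of the
inputs, so those entries match $c$ automatically. For layers $i \le k$ I
must argue that $a_i + b_i = c_i = a'_i + b'_i$. A priori domination only
gives $a_i + b_i \le a'_i + b'_i = c_i$, so the main obstacle is ruling out
a strict decrease. The resolution I expect is that $c$ is Pareto optimal:
if $a_i + b_i < c_i$ held for some $i \le k$ while all other entries are
$\le c_i$, then the $k$-combination of $a$ and $b$ would be a relaxed
partial solution (it satisfies the conditions of
Lemma~\ref{lemma:dp-min-cap-correct}, since these conditions are monotone
and already hold for $a', b'$) that dominates $c$, contradicting the
Pareto-optimality of $c$. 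Hence equality holds in every layer
$i \le k$, the $k$-combination of $a$ and $b$ is exactly $c$, and the two
required properties are established.

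The hard part will be the last paragraph: confirming that
Lemma~\ref{lemma:dp-min-cap-correct} applies to the combination of the
smaller $a$ and $b$ (so that the dominating combination is genuinely a
relaxed partial solution and can be used to contradict the Pareto
optimality of $c$), and checking that the branching-layer bound
$\max\{k_a, k_b\} \le k$ survives the substitution so the $k$-combination
remains defined.
```
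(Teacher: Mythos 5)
Your proposal is correct and is essentially the paper's own argument: both start from the decomposition of Lemma~\ref{lemma:dp-min-cap-complete} and use the same contradiction — a strictly dominating replacement for $a$ or $b$ would, via Lemma~\ref{lemma:dp-min-cap-correct}, yield a combination dominating $c$ and contradict the Pareto optimality of $c$. The paper phrases this as showing that the pieces from Lemma~\ref{lemma:dp-min-cap-complete} are already Pareto optimal, while you phrase it as substituting Pareto-optimal dominators and verifying nothing changes; these are the same proof.
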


\begin{proof}
  Let $c$ be a Pareto-optimal relaxed partial solution for $c_0$ leaves with branching layer $k$.
  By Lemma~\ref{lemma:dp-min-cap-complete}, there are two relaxed partial solutions $a$ and $b$ that satisfy all conditions above apart from being Pareto optimal.
  We show that $a$ and $b$ are indeed Pareto optimal.
  Assume that $a$ is not Pareto optimal, and that it is dominated by some relaxed partial solution $a'$ with the same number of leaves.
  Observe that $a'$ and $b$ together satisfy all conditions from Lemma~\ref{lemma:dp-min-cap-correct}.
  This directly implies that the $k$-combination of $a'$ and $b$ is a relaxed partial solution with $c_0$ leaves as well.
  Moreover, the $k$-combination of $a'$ and $b$ dominates $c$ since $a'$ has at most as many vertices as $a$ in each layer.
  Thus, $c$ is not Pareto optimal which contradicts the assumption.
  Using the symmetrical argument for $b$, we showed that both $a$ and $b$ are Pareto-optimal relaxed partial solutions which concludes the proof.\hfill
\end{proof}

\subsubsection{Pruning}
\label{sec:pruning}

The following lemma provides an upper bound on the number of leaves in a valid layer tree that depends on the number of vertices and the upper bounds given for a pair of layers.

\begin{lemma}
  \label{lemma:two-layers}
  Let $0 < i < j$ be two layers in an instance of \sofaclapfeas with $n_i$ and $n_j$ vertices and upper capacities $\high_i$ and $\high_j$, respectively.
  Then the maximum number of leaves in a valid layer tree is upper bounded by $n_i \high_i$, $n_j \high_j$, and
  \begin{equation*}
    \biggl\lfloor \frac{\high_j}{\high_i} \biggr\rfloor \high_i n_j + (\high_j \bmod \high_i) \cdot \max\biggl\{0,n_i - n_j \cdot \biggl\lfloor \frac{\high_j}{\high_i}\biggr\rfloor\biggr\}.
  \end{equation*}
\end{lemma}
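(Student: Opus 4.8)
The plan is to use that in any valid layer tree the number of leaves equals the sum of the weights of the vertices in a single layer, because every leaf has a unique path to the root and hence a unique ancestor in each layer, so the leaf-sets below distinct vertices of one layer are disjoint and exhaust all leaves. The first two bounds then drop out immediately: summing over layer $i$ there are at most $n_i$ vertices, each of weight at most $\high_i$, giving $n_i\high_i$, and the symmetric sum over layer $j$ gives $n_j\high_j$. All the content is in the third bound, which couples the two layers.

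For the third bound I would exploit $i < j$: layer $j$ lies above layer $i$, so each layer-$i$ vertex has a unique ancestor in layer $j$. This partitions the at most $n_i$ layer-$i$ vertices by their layer-$j$ ancestor; writing $d_v$ for the number of layer-$i$ vertices below a layer-$j$ vertex $v$, the weight $w(v)$ is the sum of the weights of those $d_v$ vertices, each at most $\high_i$, and is also at most $\high_j$, so $w(v) \le \min(\high_j, d_v\high_i)$. Denoting the leaf count by $L$, this yields
\[
  L = \sum_v w(v) \le \sum_v \min(\high_j, d_v\high_i),
\]
where the sum runs over the at most $n_j$ layer-$j$ vertices and $\sum_v d_v \le n_i$. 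The problem is thereby reduced to maximizing the right-hand side under exactly these two constraints.

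Setting $q = \floor{\high_j/\high_i}$ and $r = \high_j \bmod \high_i$, so $\high_j = q\high_i + r$ with $0 \le r < \high_i$, the key algebraic observation is the identity
\[
  \min(\high_j, d\high_i) = \high_i\min(d, q) + r\cdot[\,d \ge q+1\,],
\]
where $[\cdots]$ is the indicator; one checks it on the three ranges $d = 0$, $1 \le d \le q$, and $d \ge q+1$, the last using $r < \high_i$. Summing and writing $S = \sum_v \min(d_v, q)$ and $P = \lvert\set{v : d_v \ge q+1}\rvert$ turns the bound into $L \le \high_i S + rP$. I would then record two constraints: $S \le qn_j$, since each of the at most $n_j$ terms is at most $q$; and $S + P \le n_i$, since $\sum_v d_v = S + \sum_v \max(d_v - q, 0) \ge S + P$ (each vertex counted in $P$ contributes at least one to the surplus $\sum_v \max(d_v - q, 0)$), while $\sum_v d_v \le n_i$.

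The remaining estimate is then routine and I expect no difficulty there: using $0 \le r < \high_i$ and $P \le n_i - S$,
\[
  \high_i S + rP \le (\high_i - r)S + rn_i \le (\high_i - r)qn_j + rn_i = q\high_i n_j + r(n_i - qn_j),
\]
and since $r \ge 0$ this is at most $q\high_i n_j + r\max\set{0, n_i - qn_j}$, the claimed bound. The real crux, and the step I would be most careful about, is this reduction to the two-variable optimization in $S$ and $P$ — in particular the clean justification of $S + P \le n_i$ and the observation that replacing $r(n_i - qn_j)$ by $r\max\set{0, n_i - qn_j}$ only loosens the bound. This single chain then covers both regimes uniformly: when $n_i \le qn_j$ the first bound $n_i\high_i$ is the binding one, and when $n_i > qn_j$ the third expression is the relevant estimate.
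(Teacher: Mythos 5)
Your proof is correct, and although it arrives at the same quotient-and-remainder decomposition of $\high_j$ by $\high_i$, it takes a genuinely different route from the paper. The paper argues on trees via an exchange argument: it relaxes the problem so that weights in layer $i$ and the grouping under layer $j$ can be chosen freely, claims that a maximizer may be assumed to give every layer-$j$ vertex full-weight layer-$i$ descendants except at most one (shifting weight otherwise), and then counts full-weight descendants (at most $\lfloor \high_j/\high_i\rfloor$ per layer-$j$ vertex, $n_j$ vertices) and partial-weight descendants (at most $\max\{0,\,n_i - n_j\lfloor \high_j/\high_i\rfloor\}$ of them, each contributing at most $\high_j \bmod \high_i$) separately; the structural claims about the maximizer are left informal. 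You instead bound each layer-$j$ weight pointwise by $\min(\high_j, d_v\high_i)$, rewrite the minimum via the identity $\min(\high_j, d\high_i)=\high_i\min(d,q)+r\,[d\ge q+1]$, and reduce everything to optimizing $\high_i S + rP$ under $S\le qn_j$ and $S+P\le n_i$, which you solve by a short chain of inequalities. What your version buys is rigor: no rearrangement of weights and no assertions about how an optimal configuration must look, just a relaxation to an explicit two-variable problem whose solution mechanically yields the bound (including the $\max\{0,\cdot\}$ clipping, which you justify cleanly from $r\ge 0$). What the paper's version buys is brevity and a concrete picture of which packings are tight. Both use only the upper capacities and the vertex budgets, and both first-two bounds are handled identically, so the statements proved coincide.
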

\begin{proof}
  The upper bounds of $n_i \high_i$ and $n_j \high_j$ are obvious.
  For the remaining upper bound, we optimistically assume that we can choose the weight of each vertex in layer $i$ arbitrarily (i.e., the number of leaves below this vertex) and that we can arbitrarily choose which vertices in layer $i$ are descendants of which vertices in layer $j$, i.e., how the weights in layer $i$ are grouped to form the weights in layer $j$.

  Consider one vertex $x$ in layer $j$ that has descendants $v_1, \dots, v_k$ in layer $i$.
  Then, when maximizing the number of leaves, we can assume without loss of generality that $w(v_i) = \high_i$ except for at most one $i \in [k]$.
  This is true as we can otherwise shift weight between the $v_i$, which can only free up vertices on layer $i$.
  Now we count the total weight in layer $j$ by separately counting descendants in layer $i$ that have weight $\high_i$ and those that have lower weight.

  Clearly, vertex $x$ in layer $j$ can have at most $\lfloor \frac{\high_j}{\high_i} \rfloor$ descendants with the full weight of $\high_i$, which together contribute a weight of at most $\lfloor \frac{\high_j}{\high_i} \rfloor \cdot \high_i$ per vertex.
  As there are at most $n_j$ vertices in layer $j$, this gives the first term in the claimed bound.

  Note that if there are also descendants with weight less than $\high_i$, then every vertex in layer $j$ actually has $\lfloor\frac{\high_j}{\high_i}\rfloor$ descendants with full weight.
  Thus, there are at most $\max\{0, n_i - n_j\cdot \lfloor\frac{\high_j}{\high_i}\rfloor\}$ descendants that have weight less than $\high_i$.
  Each of them contributes at most $(u_j \mod u_i)$ to the weight its corresponding vertex in layer $j$, which concludes the proof. \hfill
\end{proof}
We use this lemma as follows.
Let $a = (a_0, \dots, a_\layer)$ be a relaxed partial solution.
One can observe that the problem of extending $a$ to a full solution is again an instance of \sofaclapfeas, where we reduce the number of vertices $n_i$ available on each layer by $a_i$ if $a_i > 1$.
We then test whether the upper bound from Lemma~\ref{lemma:two-layers} is below $n_0 - a_0$ in which case the partial solution $a$ can be discarded as it cannot be extended to a full solution with $n_0$ leaves.
We compute this upper bound for every pair of consecutive layers, i.e., for layers $i$ and $j = i + 1$ for $i \in [\layer - 1]$.

We note that this upper bound does not use the lower capacities.
We get an additional upper bound if $\lfloor \frac{\high_j}{\high_i} \rfloor\cdot \high_i < \low_j$.
In this case, the lower bound $\low_j$ on layer $j$ is so high that only having descendants on layer $i$ with weight $\high_i$ is not sufficient to reach $\low_j$.
In this case, we additionally get the upper bound $\high_j \cdot \lfloor n_i / \lceil \frac{\high_j}{\high_i} \rceil \rfloor$.

\subsubsection{Balanced Solutions}

To obtain all relaxed partial solutions with $c_0$ sources, the dynamic program combines all pairs of relaxed partial solutions with $a_0$ and $b_0$ sources such that $c_0 = a_0 + b_0$.
Here we show that it is sufficient to consider only combinations that are \emph{balanced} in the sense that $a_0, b_0 \ge \frac{c_0}{3}$.
\begin{lemma}
  \label{lemma:balanced-solution}
  For every Pareto-optimal relaxed partial solution $c$ with $c_0 > 1$ leaves with branching layer $k$, there are two Pareto-optimal relaxed partial solutions $a$ and $b$ with $a_0$ and $b_0$ leaves, respectively, such that
  \begin{itemize}
    \item $a_0 \geq \frac{c_0}{3}$ and $b_0 \geq \frac{c_0}{3}$,
    \item $c$ is the $k$-combination of $a$ and $b$.
  \end{itemize}
\end{lemma}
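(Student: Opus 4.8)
The plan is to turn the statement into a purely combinatorial partitioning question and to get Pareto-optimality almost for free. First I fix an almost-valid tree $T_c$ realizing $c$ with branching layer $k$, and let $S_1, \ldots, S_{c_k}$ be the subtrees hanging off the $c_k \ge 2$ children (in layer $k$) of the single vertex in layer $k+1$, with weights $w_1, \ldots, w_{c_k}$ summing to $c_0$. Any partition of these subtrees into two nonempty groups $A$ and $B$ induces, via their respective $k$-combinations, two relaxed partial solutions $a$ and $b$ whose $k$-combination is $c$: the three conditions of Lemma~\ref{lemma:dp-min-cap-correct} hold because each group uses at most $c_i \le n_i$ vertices in layer $i$, the total leaf count equals $c_0 \le \high_{k+1}$, and each group connects at least $\low_k$ leaves (every single subtree already has weight at least $\low_k$ since $T_c$ is valid up to layer $k$, and the groups are nonempty). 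Hence it suffices to find a partition with $\sum_{j \in A} w_j \ge c_0/3$ and $\sum_{j \in B} w_j \ge c_0/3$.

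Pareto-optimality of the two pieces then comes along automatically, by exactly the argument already used in Lemma~\ref{lemma:Pareto-dp-correct}: if $a$ were dominated by some relaxed partial solution $a'$ with $a'_0 = a_0$, then $a'$ and $b$ would still satisfy the conditions of Lemma~\ref{lemma:dp-min-cap-correct}, so their $k$-combination would be a relaxed partial solution with $c_0$ leaves dominating $c$, contradicting the Pareto-optimality of $c$; the symmetric statement handles $b$. So I only have to worry about the balance condition.

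For the balance, the key observation is that a partition of $w_1, \ldots, w_{c_k}$ into two parts each of total weight at least $c_0/3$ exists if and only if $\max_j w_j \le 2c_0/3$. Necessity is immediate, since the part containing a subtree of weight more than $2c_0/3$ leaves less than $c_0/3$ for the other part. For sufficiency I distinguish two cases: if some $w_j$ lies in $[c_0/3, 2c_0/3]$, that subtree alone forms one part and the remaining weight $c_0 - w_j \in [c_0/3, 2c_0/3]$ forms the other; otherwise all weights are below $c_0/3$, and greedily adding subtrees to one part until its total first reaches $c_0/3$ overshoots by less than $c_0/3$, so this part stays below $2c_0/3$ while its complement exceeds $c_0/3$.

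It remains to guarantee $\max_j w_j \le 2c_0/3$, and this is the main obstacle, since in an arbitrary realization a single subtree may carry almost all of the leaves. I would resolve it by not fixing $T_c$ arbitrarily but choosing, among all realizations of the vector $c$, one that minimizes the potential $\sum_j w_j^2$. If in this optimal realization some subtree $S_1$ had weight $w_1 > 2c_0/3$, it would be the unique such subtree (two would sum to more than $c_0$), and all others together would weigh less than $c_0/3$, so a lightest subtree $S_2$ would be strictly lighter than $S_1$. I would then perform a weight-preserving exchange moving a small piece of $S_1$ onto $S_2$ — a single leaf taken from a layer-$1$ vertex of $S_1$ that still has slack above $\low_1$, or, when every bottom branch of $S_1$ is tight, an entire lowest-level branch — chosen small enough that it strictly decreases $\sum_j w_j^2$, contradicting minimality. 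The delicate part, and where normalization (Lemma~\ref{lemma:normalization}) and the feasibility of $c$ are genuinely used, is verifying that a movable piece that is simultaneously \emph{valid to remove}, \emph{valid to attach}, and \emph{small enough to reduce the potential} always exists: $S_1$ stays above all its lower capacities because it is the heavy tree, $S_2$ stays below the upper capacities because it was strictly lighter, and the interesting bookkeeping is ruling out the corner case where $S_1$ is entirely tight yet still exceeds $2c_0/3$. Pushing this exchange argument through yields a realization with $\max_j w_j \le 2c_0/3$, which by the previous paragraph completes the proof.
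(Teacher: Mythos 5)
Your reduction of the statement to a tree-level question is sound: since $c$ is the $k$-combination of $a$ and $b$ exactly when $a_i+b_i=c_i$ for $i\le k$, finding the required pair is indeed equivalent to exhibiting a realization $T_c$ whose layer-$k$ subtrees split into two nonempty groups of weight at least $c_0/3$ each; your observations that each group automatically satisfies the conditions of Lemma~\ref{lemma:dp-min-cap-correct} (each layer-$k$ root already has weight at least $\low_k$), that Pareto-optimality comes for free by the domination argument of Lemma~\ref{lemma:Pareto-dp-correct}, and that such a split exists iff $\max_j w_j\le 2c_0/3$ are all correct. The gap is in the one step that carries all the weight: proving that some realization achieves $\max_j w_j\le 2c_0/3$. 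Your exchange argument is not carried out, and as described it does not go through. First, the two moves you allow (a single leaf, or a lowest-level branch) are insufficient: if an intermediate layer has $\low_i=\high_i$, detaching a layer-$1$ branch changes the weight of its layer-$i$ ancestor and is never legal, so one must be prepared to move branches rooted at arbitrary layers below $k$, subject to lower-capacity slack along the entire donor path and upper-capacity slack along the entire receiver path simultaneously. Second, the potential $\sum_j w_j^2$ is not a valid proxy for the maximum once $c_k\ge 3$: a weight profile with one subtree above $2c_0/3$ and the rest spread evenly can have strictly smaller potential than a profile with all weights below $2c_0/3$, so even the \emph{global} minimizer of your potential may violate the bound you need. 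Your contradiction therefore hinges entirely on the unproven claim that every realization with $w_1>2c_0/3$ admits a legal, sufficiently small, potential-decreasing transfer, and you explicitly defer exactly the corner case (a heavy subtree that is tight everywhere) where this is in doubt.

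The paper avoids all of this tree surgery by working at the level of vectors and inducting on $c_0$: it takes an arbitrary decomposition $c=a\combine b$ from Lemma~\ref{lemma:Pareto-dp-correct}, and if it is unbalanced (say $a_0<c_0/3$, $b_0>2c_0/3$), applies the induction hypothesis to split $b$ into a balanced pair $d,e$ at its own branching layer $k_b$, then regroups as the $k_b$-combination $ad$ of $a$ and $d$ together with $e$. The balance then follows from the purely arithmetic facts $e_0\ge b_0/2>c_0/3$ and $a_0+d_0\ge a_0+(c_0-a_0)/3\ge c_0/3$, and validity of $ad$ follows from Lemma~\ref{lemma:dp-min-cap-correct}. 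If you want to salvage your approach, you would need to either complete the capacity bookkeeping for a richer set of exchange moves and show the potential has no bad local minima under them, or switch to the paper's recursive rebalancing, which sidesteps the realization question entirely.
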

\begin{proof}
  We prove the lemma by induction on the number of sources $c_0$ to connect.
  For $c_0 = 2$, there is at most one relaxed partial solution $(2, 1, \ldots, 1)$, which can be partitioned into two relaxed partial solutions $(1, 1, \ldots, 1)$.
  Such a partition is obviously balanced.

  Let $c_0 > 2$ and $c$ be a Pareto-optimal relaxed partial solution for $c_0$ sources with branching layer $k$.
  By Lemma~\ref{lemma:Pareto-dp-correct}, there are two Pareto-optimal relaxed partial solutions $a$ and $b$ for $a_0$ and $b_0$ leaves, respectively, and $a_0, b_0 \geq \low_k$ such that $c$ is the $k$-combination of $a$ and $b$.
  If both $a$ and $b$ connect at least $\frac{c_0}{3}$ sources, then we are done.
  Thus, we assume without loss of generality that $a_0 < \frac{c_0}{3}$ and $b_0 > \frac{2c_0}{3}$.

  The main idea is to rebalance $a$ and $b$ by removing a part of $b$ and combining it with $a$.
  By induction, $b$ has a balanced partition into two Pareto-optimal relaxed partial solutions $d$ and $e$ with $d_0$ and $e_0$ leaves, respectively, such that $d_0, e_0 \geq \frac{b_0}{3}$ and $d_0, e_0 \geq \low_{k_b}$, where $k_b$ is the branching layer of $b$.
  Without loss of generality, we assume that $d_0 \leq e_0$.
  Let $ad$ be the $k_b$-combination of $a$ and $d$ with $ad_0$ leaves.
  We argue that the $k$-combination of $ad$ and $e$ equals $c$ and is balanced.
  Because of $k_a, k_d \leq k_b \leq k$ and $k_e \leq k$ (all inequalities follow from Pareto-optimality), both combinations are well defined.
  Figure~\ref{fig:balanced-tree} depicts a layer tree corresponding to the described setting.

  \begin{figure}
    \centering
    \includegraphics[page=1,width=\columnwidth]{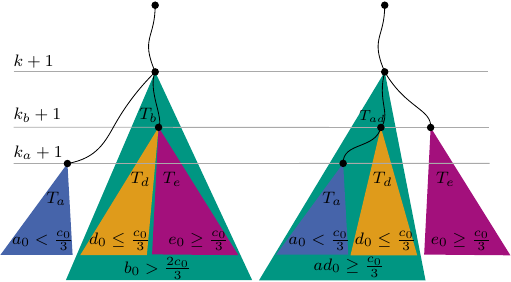}
    \caption{The setting before and after rebalancing. We $k_b$-combine $T_a$ and $T_d$ first, before $k$-combining the result with $T_e$.}
    \label{fig:balanced-tree}
  \end{figure}

  We first prove that this partition is indeed balanced, i.e., $e_0 \geq \frac{c_0}{3}$ and $ad_0 \geq \frac{c_0}{3}$.
  By assumption, it is $d_0 \leq e_0$ which directly implies $e_0 \geq \frac{b_0}{2} > \frac{c_0}{3}$.
  Further, we have
  $ad_0 = a_0 + d_0 \geq a_0 + \frac{b_0}{3} = a_0 + \frac{c_0 - a_0}{3} = \frac{c_0}{3} + \frac{2 a_0}{3} \geq \frac{c_0}{3}$
  for the number of leaves of $ad_0$.

  Further, we show that both $ad$ and $e$ relaxed partial solutions.
  For $e$, this is true by definition.
  As $ad$ is the combination of two relaxed partial solutions, it is sufficient to show that the conditions from Lemma~\ref{lemma:dp-min-cap-correct} are satsified.
  First, observe that $ad$ consists of at most as many vertices in each layer as $c$.
  For the upper capacity constraint, we show that $\high_{k_b + 1} \geq ad_0 = a_0 + d_0$.
  Since $b$ is a relaxed partial solution with branching layer $k_b$, it is $\high_{k_b + 1} \geq b_0 = d_0 + e_0 \geq d_0 + \frac{c_0}{3} \geq d_0 + a_0$.
  Further, note that $a_0 \geq \low_k \geq \low_{k_b}$ and $d_0 \geq \low_{k_b}$ hold by construction.
  By Lemma~\ref{lemma:dp-min-cap-correct}, $ad$ is a relaxed partial solution since $a$ and $d$ are relaxed partial solutions.

  Now, we consider the $k$-combination $c'$ of $ad$ and $e$ with $a_0 + d_0 + e_0$ leaves.
  It remains to show that $c'$ is equal to $c$.
  For the layers above $k$, both $c'$ and $c$ only have one vertex by definition.
  In the layers $i \in \set{k_b + 1, \ldots, k}$, $c$ has two vertices because $k_a \leq k_b$.
  The same holds for $c'$ since $ad$ consists of one vertex above its branching layer $k_b$, as does $e$ whose branching layer is at most $k_b$.
  For the layers $i \in \setz{k_b}$, both $c$ and $c'$ have the sum of the number of vertices in $a$, $d$ and $e$ in this layer.

  To conclude, the relaxed partial solutions $ad$ and $e$ are a balanced partition of $c$. \hfill
\end{proof}
This lemma has two main algorithmic consequences.
First, we never need to combine partial solutions with $a_0$ and $b_0$ leaves if $a_0 \le \frac{a_0 + b_0}{3}$ or $b_0 \le \frac{a_0 + b_0}{3}$.
Moreover, the full solution with $n_0$ leaves can be obtained by combining two partial solutions with at least $\frac{n_0}{3}$ leaves and thus we do not need to compute partial solutions with more than $\frac{2n_0}{3}$ leaves.

\subsection{Completing a Partial Solution}

Here we present two different ways in which we try to stop the dynamic program prematurely by finding a full solution before the dynamic program actually reaches $n_0$ leaves.

\subsubsection{Checking Counterparts}

For a relaxed partial solution $a$ with $a_0 > \frac{n_0}{2}$ leaves, we have already generated all relevant counterparts with $b_0 = n_0 - a_0 < \frac{n_0}{2}$ leaves that could be composed with $a$ to form a full solution.
Thus, we can check for each of these relaxed partial solutions if the conditions from Lemma~\ref{lemma:dp-min-cap-correct} are fulfilled.
If yes, then we can stop and output the solution.
If there is no such counterpart for $a$, then $a$ is not part of a full solution, and we can discard $a$.
In any case, this means that we do not need to store any relaxed partial solutions for more than $\frac{n_0}{2}$ leaves.

\subsubsection{Greedy Completion}
\label{sec:greedy-completion}

If we have a relaxed partial solution $a$ with $a_0 < \frac{n_0}{2}$ leaves, then we do not know any possible counterparts yet, other than in the previous case.
However, as already mentioned in Section~\ref{sec:pruning} on pruning, the problem of extending $a$ to a full solution is again an instance of \sofaclapfeas.
We use a greedy heuristic to search for a solution of this remaining instance.

The idea here is to construct a layer tree using on the remaining vertices connecting $b_0 = n_0 - a_0$ leaves.
Our greedy approach is aimed at constructing almost valid solutions.
We ignore the lower capacity constraints for now and assume that $\ell_i=0$ for all $i \in \set{0,\dots,\lambda}$.
For better intuition of our greedy approach, we introduce a different view on the problem: The leaves are units of flow and the vertices can hold a certain amount of incoming flow -- their capacity.
Further flow that cannot be pushed to a higher layer is wasted.
The goal is to push as much flow from layer one to the highest layer which is equivalent to connecting as many leaves as possible.
Starting with layer 1 where we assume every vertex is completely saturated, we push flow upwards as follows.
Assume that we have already pushed the flow to layer $i - 1$ with $i \in \set{2,\dots,\lambda}$.
Pushing flow from layer $i - 1$ to layer $i$ is a bin-packing problem where each vertex in layer $i$ has capacity $u_i$ and the weight of each vertex in layer $i - 1$ is its incoming flow.
We sort the vertices in layer $i-1$ by their incoming flow in increasing order and assign each vertex greedily to a vertex in layer $i$ with the currently lowest incoming flow.
This is repeated for each layer and the incoming flow at the root equals the maximum number of leaves we can connect using this greedy approach.

Observe that without lower capacity constraints, it is also possible to connect any number of leaves lower than that since in this case any layer sub-tree is also valid.
Thus, without lower capacity constraints, we know that a vector $b$ is a partial solution if the incoming flow at the root of a greedily constructed layer tree using the vertices in $b$ is at least $b_0$.
In particular, we can easily construct a valid layer tree with exactly $b_0$ leaves in this case.

However, this greedy construction takes linear time and can be very time consuming if done for every relaxed partial solution.
Determining the incoming flow at the root can be sped up by not constructing the entire tree explicitly.
Instead, we only keep track of how many vertices in each layer hold a given amount of flow and process vertices with the same amount of flow together as one batch.
In particular, it can be shown that using our greedy approach, there are at most $i$ vertices with pairwise distinct amounts of incoming flow in layer $i\in[\lambda]$.
Using this optimization, the amount of flow that reaches the highest layer can be computed in $\mathcal{O}(\lambda^2 \cdot \log(\lambda))$.

If we have additional lower constraints, then not every layer sub-tree of a greedily constructed layer tree $T$ is necessarily almost valid itself.
Thus, in this case, we want to find an almost-valid layer sub-tree of $T$ connecting exactly $b_0$ leaves.
Note that we do not change the topology of the greedily constructed tree after its construction.

The idea here is to distribute exactly $b_0$ units of flow from the highest to the lowest layer in $T$ such that no flow is wasted and all capacity constraints are satisfied.
How much flow can be distributed within a sub-tree rooted at some vertex depends on the topology of its sub-tree.
The minimum and maximum number of flow units that can be distributed in the sub-tree of a vertex $v$ is captured in $f_{\min}(v)$ and $f_{\max}(v)$, respectively.
Each vertex needs to satisfy its own capacity constraints on the minimum and maximum flow units of its children.
For a vertex $v$ in layer $i > 1$, $f_{\min}(v)$ is the maximum of its lower capacity and the sum of $f_{\min}$ over its children.
Analogously, $f_{\max}(v)$ is the minimum of its upper capacity and the sum of $f_{\max}$ over its children.

Observe that if a vertex $v$ is assigned flow in $\set{0} \cup [f_{\min}(v), f_{\max}(v)]$, then it is possible to distribute the flow within the sub-tree of $v$ such that each vertex in the sub-tree satisfies all constraints.
Further note that if some vertex $v$ is assigned more than $f_{\max}(v)$, then $v$ immediately violates the upper capacity constraint.
However, $v$ max be assigned less than $f_{\min}(v)$ if $f_{\min}(v)$ is more than its lower capacity and if some vertex in the sub-tree is not used at all.
We distribute the units of flow greedily from top to bottom such that all vertices are assigned at most $f_{\max}(v)$ flow and in each layer the number of vertices with at least $f_{\min}(v)$ flow is at high as possible.
The flow units assigned to a vertex in layer 1 correspond to the number of leaves that we connect to it.
If the capacity constraints of all vertices are satisfied, we obtain an almost-valid layer tree with the remaining vertices and exactly $b_0$ leaves.
This takes linear time in the size of the constructed tree.

\subsubsection{Estimating the Greedy Heuristic}
The running time of this greedy heuristic is linear in the number of vertices, which tends to be too expensive to be run for every partial solution.
However, we have two ways to predict whether running the greedy heuristic is promising.
A first optimization is to only greedily construct a tree if we reach the required number of leaves while ignoring the lower capacities.
This can be done in $\mathcal{O}(\lambda^2 \cdot \log(\lambda))$, as stated above.
If this does not yield a tree connecting all sources, then the greedy heuristic that respects the lower bound cannot find a valid solution and we can skip it.
This approach will never discard a feasible completion.

Secondly, we observe that consecutive runs of the greedy heuristic are often on somewhat similar instances, i.e. they contain a similar number of sources.
Thus we remember the result of the previous run (in terms of how many vertices were used on which layer) and then use this as the basis for a heuristic to decide whether the greedy algorithm should be run.
More specifically, each time we greedily construct a layer tree $T$ from some vector $b$, we store both $a$ and the number of leaves connected by $T$.
Using this information and the capacity constraints of each layer, we estimate for further vectors how many leaves they connect using the greedy heuristic.
If this number is lower than the number of leaves to connect, we do not try to construct a corresponding tree.
This estimation is determined in $\mathcal{O}(\layer)$.
In contrast to the first predictor, this estimate may prune vectors for which the greedy construction would succeed.
This does, however, not affect the validity of the dynamic program as the greedy completion is only a shortcut and the partial solution can still be competed by the dynamic program.

\section{Heuristic Optimization}
\label{sec:embedding}

As mentioned in the introduction, studying the abstract problem \sofaclapfeas is motivated by the fact that it appears to be the hard core of \sofaclap (and potentially other related network design problems).
To justify this motivation, we evaluate how solving \sofaclapfeas with our dynamic program can bootstrap the MILP formulation of \sofaclap to find a good solution in Section~\ref{sec:evaluation}.
In this section, we additionally provide heuristics that aim to find a good solution for \sofaclap, given a solution for the corresponding \sofaclapfeas problem.
We evaluate them in comparison to the MILP in Section~\ref{sec:evaluation}.

Recall that for \sofaclap, we are given a layer graph with edge lengths and a solution consists of a layer tree that is a subgraph of the layer graph.
Being a subgraph can be expressed with an injective function that maps each vertex of the layer tree to a vertex in the corresponding layer of the layer graph.
We call this function an \emph{embedding} and we call the vertices of the layer graph \emph{positions} to which the vertices in the layer tree are mapped.
This makes it easier to distinguish between the vertices in these different graphs.
We prove that this embedding problem alone is already NP-hard in Section~\ref{sec:optim-embedd-hard}.
Instead of computing an optimal embedding, we thus propose several simple heuristics in Section~\ref{sec:embedding heuristics}

\subsection{Optimal Embedding is Hard}
\label{sec:optim-embedd-hard}
Our dynamic program produces the tree structure of a feasible solution for a fully connected \sofaclap instance.
It does, however, not take into account the given edge weights.
We show that finding an optimum mapping from a $\lambda$-forest to the vertices of the \sofaclap layer graph is strongly NP-hard.

Given a \sofaclap instance $S=(G=(V, E_G), l_1, u_1, \dots, l_\layer,u_\layer, \cost, \len)$ and a valid layer tree $F=(F_0 \dotcup \dots \dotcup F_\layer, E)$ for that instance, we want to find a set of mappings $\tau_i: F_i \to V_i$ for each layer $i \in [\layer]$ to obtain a cable layout with minimum weight.
A mapping is an injective function, i.e., vertices in $F$ are mapped to distinct vertices.
In this setting, $s(u)$ and thus $\cost(s(u))$ is a fixed number for each edge $(u,v) \in E$.

The corresponding decision problem is defined as follows:

\begin{Definition}[\sofaclapembed]
  Given a solar farm instance $S$, a layer tree $F$ and a number $k$, is there a set of mappings $\tau_i$ for each $i \in [\layer]$ with total weight \[\sum_{i=1}^{\layer}\sum_{(u, v)\in E_i(F)} \cost(s(u)) \cdot \len (\tau(u), \tau(v)) \leq k \; ?\]
\end{Definition}

This problem is strongly NP-hard, even if vertices are assigned positions in the Euclidean plane and the length of an edge is the Euclidean distance between the two incident vertices.
We show this by using a reduction from \threepartition, a strongly NP-hard problem that is defined as follows.
Let $m, T \in \Natural$ and let $A = \set{a_1, \ldots, a_{3m}}$ be a set of $3m$ numbers with $\frac{T}{4} < a < \frac{T}{2}$ for all $a \in A$ and $\sum_{a \in A} a = mT$.
Is there a partition of $A$ into $m$ sets $S_1, \ldots, S_m$ such that $|S_i| = 3$ and $\sum_{a \in S_i} a = T$ for each $i \in [m]$?

\begin{lemma}
  \sofaclapembed is strongly NP-hard, even when $\len$ is the Euclidean distance between the edge endpoints and $\cost(k) = 1$ for all $k \in \Natural$.
\end{lemma}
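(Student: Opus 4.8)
The plan is to reduce from \threepartition, which the excerpt states is strongly NP-hard. Before designing the gadget I would note a guiding constraint: the construction must use at least three layers. For a two-layer ``broom'' (root, $m$ combiners, leaves) the embedding decouples, since the combiner-to-root edges contribute a constant (the combiners are interchangeable and all $m$ top positions are used), and the optimal leaf assignment is exactly a min-cost Hitchcock transportation problem with supplies and demands, which is solvable in polynomial time. Hardness must therefore come from the interaction between two consecutive non-trivial layers. Accordingly I build a layer tree $F$ with $\layer=3$: a root, $m$ \emph{bin} vertices on layer~$2$ each carrying exactly $T$ leaves through exactly three \emph{item} vertices on layer~$1$, with capacities $\low_2=\high_2=T$ chosen so that $F$ is valid. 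Each number $a_i$ is encoded by an item-gadget of weight $a_i$ (e.g.\ a cluster of $a_i$ co-located leaves), and the bound $\frac{T}{4}<a<\frac{T}{2}$ is what forces every bin of total weight $T$ to consist of exactly three items. The positions of the layer graph are placed in the Euclidean plane, and since $\cost\equiv1$ the objective is the total Euclidean edge length.

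I would then fix the intended solution: given a \threepartition with blocks $S_1,\dots,S_m$, each block is routed to one bin, the three item vertices of that block are placed at three designated positions around the bin, and each item gathers its own cluster of leaves locally. Let $B$ be the cost of this embedding and set the threshold $k=B$ (the bin-to-root edges again contribute only a fixed additive constant). The \emph{easy direction} is to check that a YES-instance yields an explicit injective embedding of cost exactly $B$, by placing each block's three items at their reserved positions and matching leaves to the nearest cluster positions.

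The \emph{hard direction}, and the \textbf{main obstacle}, is the converse: showing that any embedding of cost at most $k$ induces a partition into triples each summing to $T$. The difficulty is that Euclidean length is \emph{not} additively separable, so I cannot simply read off per-bin weights; I must instead exploit the strict triangle inequality and the convexity of a sum of distances to a shared relay vertex. Concretely, I would arrange the geometry so that an item vertex must sit essentially on top of its cluster (any displacement costs $\Omega(a_i)$ through the leaf edges), and so that a bin vertex can connect to three such item positions at the baseline cost only when the three clusters it serves have total weight exactly $T$; any other grouping, or any splitting of a cluster, pushes at least one item off its tight position and incurs a strict additive excess $\delta>0$ over $B$. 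Formally I would certify a separable lower bound $\len(p_i,b_j)\ge\alpha_i+\gamma_j$ whose sum over a feasible routing gives
\[
  \sum_{i,j} x_{ij}\,\len(p_i,b_j)\ \ge\ \sum_i a_i\alpha_i + T\sum_j \gamma_j\ =\ B,
\]
and then argue, via complementary slackness and the $(T/4,T/2)$ window, that equality forces the support of the routing to realize triples of weight $T$.

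Finally I would handle the bookkeeping required for \emph{strong} NP-hardness. Since \threepartition is strongly NP-hard, it suffices that all coordinates, the threshold $k$, and the numbers of vertices are bounded by a polynomial in $m$ and $T$ (hence in the unary input size); with the cluster encoding this is immediate. The one delicacy specific to the Euclidean norm is the comparison of sums of square roots: I would sidestep it either by choosing coordinates so that every relevant distance is rational, or by making the separation gap $\delta$ strictly larger than the numerical precision needed to distinguish the intended optimum from all competitors, so that the decision ``cost $\le k$'' is computed exactly. I expect the construction of the gadget that makes ``weight $=T$'' detectable by total edge length, together with the proof of the strict gap for every non-conforming embedding, to be by far the most technical part.
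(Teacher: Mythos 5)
Your construction has a fatal structural flaw: in \sofaclapembed the layer tree $F$ is part of the \emph{input}, and the embedding only maps its vertices to positions --- it cannot regroup subtrees. Your three-level gadget gives each bin vertex on layer~$2$ exactly three item children whose weights $a_i$ sum to $T$; but which items are children of which bin is fixed by $F$, so writing down $F$ already requires a solution to the \threepartition instance (and if none exists, no valid $F$ with $\low_2=\high_2=T$ exists at all), i.e., the reduction is not computable in polynomial time. If instead you assign items to bins arbitrarily so that $F$ is constructible, the embedding can never change that assignment and the geometry has nothing left to detect. The grouping decision must be made by the \emph{embedding}, which means the ``bins'' have to live in the geometry of the candidate positions, not in the tree. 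This is exactly what the paper does, and with only two levels: layer~$1$ of the tree has $3m$ vertices $f_j$, each with $a_j$ leaf children; the $3m$ layer-$1$ positions are partitioned into $m$ clusters of three, each sitting above its own block of $T$ leaf positions, and consecutive clusters are separated by a distance $mT(T+1)$ equal to the entire budget $k$. Any embedding within budget must keep every leaf in the same cluster as its parent, and a counting argument (three combiners and exactly $T$ leaf positions per cluster, $mT$ leaves in total) forces each cluster's combiners to have weights summing to exactly $T$.

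This also refutes your opening claim that the two-level ``broom'' decouples into a polynomial transportation problem: the combiners are \emph{not} interchangeable, since each carries a different number $a_j$ of leaves, so the demand at a combiner position depends on which combiner is embedded there; the joint choice of combiner placement and leaf assignment is precisely the NP-hard core, as the paper's two-level reduction shows. Finally, the delicate issues you flag for the hard direction (non-separability of Euclidean length, comparing sums of square roots, certifying a strict gap $\delta$ via complementary slackness) evaporate once the inter-cluster separation reaches the whole budget: a single wrong edge already costs more than $k$, so no fine-grained accounting is needed, and all coordinates remain polynomial in $m$ and $T$, preserving strong NP-hardness.
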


\begin{proof}
  We use a reduction from \threepartition.
  Given a \threepartition-instance $I = (m, X, A)$, we construct an instance $I' = (S, F, k)$ of \sofaclapembed.
  The solar farm $S$ with two layers $V_0$ and $V_1$, which represents the triplets $S_i$, is constructed as follows.
  Layer $1$ consists of $3m$ vertices $\bigcup_{1 \leq i \leq m} W_i$, where $W_i = \set{w_{i, 1}, w_{i, 2}, w_{i, 3}}$, with upper capacity $\frac{T}{2}$ and lower capacity $\frac{T}{4}$.
  Layer $0$ consists of $mT$ vertices $\bigcup_{1 \leq i \leq m} U_i$, where $U_i = \set{u_{i, 1}, \ldots, u_{i, X}}$.
  Further, we implicitly define the length of each edge by assigning a position $\mathrm{pos}(u)$ in the Euclidean plane to each vertex $u \in V_0 \cup V_1$.
  Vertices in layer $0$ are distributed along the line $y = 0$, starting with $x = 1$.
  Vertices of the same group are assigned consecutive, distinct $x$-coordinates, and we leave a space of $m \cdot T \cdot (T + 1)$ between the last vertex and the first vertex of consecutive groups.
  The vertices in layer $1$ are distributed along the line $y = 1$, and the first vertex of each group in $V_1$ aligns with the first vertex of each group in $V_0$.
  More formally, it is
  \begin{align*}
    \mathrm{pos}(u_{i, j}) &= ((X + m \cdot T \cdot (T + 1)) \cdot i + j, 0) \text{ and } \\
    \mathrm{pos}(w_{i, j}) &= ((X + m \cdot T \cdot (T + 1)) \cdot i + j, 1) \; .
  \end{align*}

  Further, we construct a forest $F = ((F_0, F_1), E)$ that is to be embedded in $I$.
  As for the solar farm $S$, the layer $F_0$ consists of $mX$ vertices and the layer $F_1$ consists of $3m$ vertices, we denote by $f_1, \ldots, f_{3m}$.
  Each vertex $f_i \in F_1$ represents a distinct number in the set of the \threepartition-problem, and is connected to exactly $a_i$ vertices in $F_0$.
  Note that since each $a_i$ is at most $X$, $F$ is a feasible solution for the given instance.

  Finally, let $k \coloneqq m \cdot T \cdot (T + 1)$.
  An example of such an instance $I'$ with am embedding of length at most $k$ is seen in Figure~\ref{fig:embedding-reduction-example}.

  \begin{figure}
    \centering
    \includegraphics[width=\columnwidth]{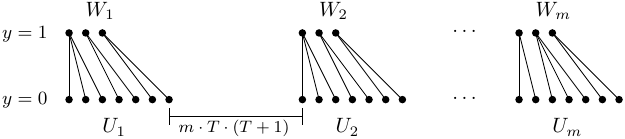}
    \caption{A possible solar farm $S$ of an instance $I'$ of \sofaclapembed with an embedding of some forest $F$.
      Since the embedding has total length of at most $k = mT(T+1)$, the depicted instance is a YES-instance of \sofaclapembed.}
    \label{fig:embedding-reduction-example}
  \end{figure}

  We now show that $I'$ is a yes-instance, i.e., there is a mapping $\tau$ between the vertices of $S$ and $F$, if and only if $I$ is a yes-instance.

  If $I$ is a yes-instance, $A$ can be partitioned into $m$ triplets $S_i$ with $\sum_{a \in S_i} a = F$.
  For each $a_j \in S_i$, we map the vertex $f_j \in F_1$ to a distinct vertex in $V_1$.
  The children of $f_j$ are mapped to distinct vertices in $U_i$.
  Since each $S_i$ consists of exactly three elements and since $\sum_{a_j \in S_i} s(f_j) =  \sum_{a_j \in S_i} a_j = F$, such a mapping exists.
  Note that vertices in $U_i$ are now only connected to vertices in $W_i$.
  The distance between any vertex in $U_i$ and any vertex in $W_i$ is less than $F$ by construction, thus the weight of the edges between $U_i$ and $W_i$ is less than $T \cdot (T + 1)$.
  This yields a total weight of at most $m \cdot T \cdot (T + 1)$, and it follows that the defined embedding is a solution.

  On the other hand, if $I'$ is a yes-instance, then there is an embedding of $F$ such that the weight does not exceed $m \cdot T \cdot (T + 1)$.
  Let $S_i = \set{a_j \mid \tau(f_j) \in W_i}$.
  Note that the upper capacity and lower capacity constraints of the vertices in $V_1$ enforce that each vertex in $V_1$ is used, i.e., $|S_i| = 3$ for each $i$.
  Further, we observe that the distance between any vertex in $U_i$ and any vertex in $W_j$ with $i \neq j$ is greater than $k$ by construction.
  Thus, vertices in $U_i$ are only connected to vertices in $W_i$, and vertices in $W_i$ are only connected to vertices in $U_i$.
  As a consequence, $\sum_{a_j \in S_i} a_j = \sum_{\tau(f_j) \in W_i} s(\tau(f_j)) = T$ holds, and the $S_i$ form a solution for $I$.
\end{proof}

\subsection{Heuristic Embedding}
\label{sec:embedding heuristics}

While computing a globally optimal embedding is NP-hard, we can make local improvements to find a good embedding.
Our heuristics for \sofaclap start with a valid layer tree by our DP together with an arbitrary embedding into the layer graph.
We then iteratively try to improve the solution.
Our first heuristic does not change the structure of the tree itself but just improves the embedding.
Our other two heuristics also make improvements by changing the structure of the graph.
We apply the heuristics one after the other exhaustively in the order they appear here and iterate until no more changes occur.

\paragraph{Layer-Wise Embedding}

Optimizing the embedding for just one layer while keeping all other layers fixed boils down to a bipartite matching problem.
Each vertex of the layer tree, has to be matched to one position and the choices for different vertices have no influence on each other, despite occupying their assigned position.
We solve this matching problem for each layer individually (without actually changing the embedding) and then adjust the layer that yields the biggest reduction in cost.
This is repeated until no further improvement is possible.
Note that a minimum cost bipartite matching can be computed efficiently, for example using the Hungarian algorithm~\cite{Some_Techn_Transp_Probl-Tomizawa71}.

\paragraph{Equal-Weight Edge Swaps}

Let $v_1, \dots, v_k$ be the vertices of the layer tree in one fixed layer that have the same weight, i.e., the vertices $v_i$ for $i \in [k]$ are all in the same layer and they each have the same number of leaves in their subtree.
Moreover, for every $i \in [k]$, let $p_i$ be the parent of $v_i$.
Then any matching between the children $v_j$ and their parents $p_j$, where parents are allowed to appear multiple times, yields a valid layer tree.
Moreover, when fixing the embedding, the cost of each matching edge is clear and independent of the other choices in this matching.
Thus, an optimal rewiring (conditioned on no additional changes, e.g., to the embedding) again boils down to computing a bipartite matching.
We do this once for all layers, starting at the leaves.

\paragraph{General Edge Swaps}

Here, we allow two vertices $v_1$ and $v_2$ on the same layer with parents $p_1$ and $p_2$, respectively, to swap their parents if this keeps the tree valid.
This can be checked by walking up the path from $p_1$ and $p_2$ to their lowest common ancestor.
We note that in this case, we cannot use a matching algorithm as the validity of a swap may depend on other swaps.
Our heuristic goes through the layers bottom up.
In each layer, we iteratively perform the swap that yields the biggest improvement, until no more improvements are possible.

\section{Evaluation}
\label{sec:evaluation}

We start by evaluating the performance of our dynamic program in comparison to Gurobi for \sofaclapfeas in Section~\ref{sec:eval-dp-gurobi}.
In Section~\ref{sec:eval-opts} we evaluate the impact of the optimizations described in Section~\ref{sec:optim-dp-algor} on the running time of our dynamic program.
Afterwards, in Section~\ref{sec:eval-scale}, we study how the running time scales with growing input size.
Although there is some dependence on the input size, we identify other factors that more strongly determine the difficulty of instances than their sheer size in Section~\ref{sec:diff-instances}.
Finally, in Section~\ref{sec:eval-sofaclap}, we consider the impact of our dynamic program on solving the \sofaclap problem and evaluate our heuristics.

\subsection{Setup}
\paragraph{Test Data}
We use the data set provided in \cite{gritzbach2022sofaclap}.
These instances have been generated to match real-world solar farms and are grouped into three size categories and come with positions.
We use only the largest instances with between 1200 and 1500 strings.

\label{sec:random-test-data}
We generate additional randomized test data parameterized by the number of layers $\layer$, the allowed interval for the number of sources $[n_\ell, n_h]$ and a scaling factor for adjacent layers in the interval $[f_\ell, f_h]$.
The latter influences the number of vertices in the higher layers.
We first draw the number of sources $n_0$ from the from $\set{n_\ell,\dots,n_h}$.
For each higher layer $i>0$ we draw the layer size $n_i$ from $\set{\floor{\frac{n_{i-1}}{f_h}},\dots,\floor{\frac{n_{i-1}}{f_\ell}}}$
The upper capacity $u_i$ of each layer $i > 0$ is chosen uniformly from $\set{\ceil{\frac{n_0}{n_i}},\dots,2 \ceil{\frac{n_0}{n_i}}}$.
The lower capacity $l_i$ is chosen uniformly at random between $0$ and $\frac{2}{3} u_i$ and it limited to the number of sources.

\paragraph{Environment}
Our experiments were run on a server with an Intel\textsuperscript{\textregistered} Xeon\textsuperscript{\textregistered} Gold 6144 8 core CPU clocked at 3.5GHz and 192GB DDR4 memory running Ubuntu 22.04.4 LTS.
The implementation is written in C++ 23 and were compiled with gcc 14.1.0.
We solve the MILP formulations using Gurobi 11.0.2 restricted to a single thread.
If not stated otherwise, we use a time limit of one hour.
Both the source code of our implementation (\href{https://doi.org/10.5281/zenodo.13868979}{DOI 10.5281/zenodo.13868979}) and the evaluated data set (\href{https://doi.org/10.5281/zenodo.13807817}{DOI 10.5281/zenodo.13807817}.) are available on Zenodo.

\subsection{Solving \sofaclapfeas}
\label{sec:eval-dp-gurobi}
\begin{figure}
  \centering
  \includegraphics[width=\columnwidth]{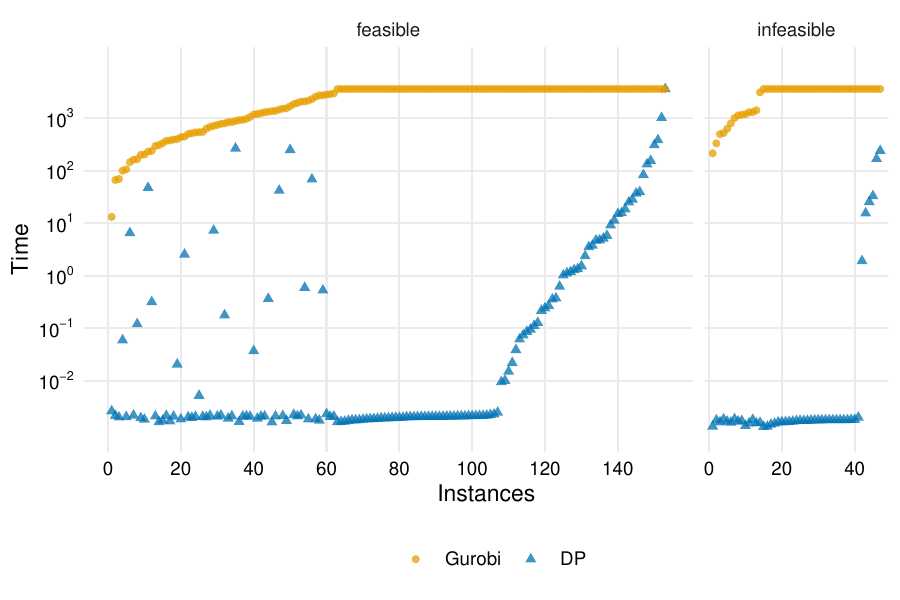}
  \caption{A comparison of the running times between Gurobi and our dynamic program.
  The instances are separated by feasibility and sorted by the running time of Gurobi.}
  \label{fig:eval-time-opts}
  \vspace{-1em}
\end{figure}
\begin{figure*}
  \centering
  \begin{subfigure}[t]{1.23\columnwidth}
    \centering
  \includegraphics[width=0.95\columnwidth]{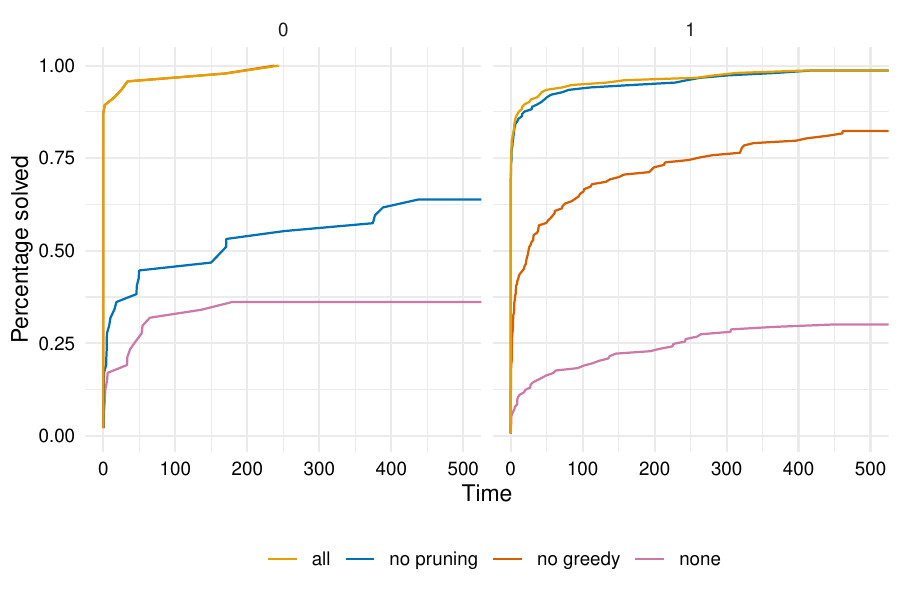}
  \end{subfigure}
  \hfill
  \begin{subfigure}[t]{0.77\columnwidth}
    \centering
    \includegraphics[width=0.95\columnwidth]{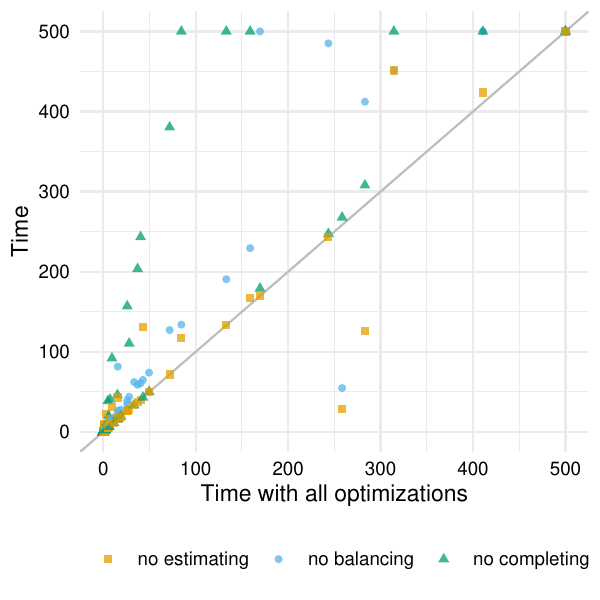}
  \end{subfigure}
  \caption{A comparison of the running times with selected optimization sets.
  The left plot shows the percentage of instances solved within a certain amount of time.
  The right plot compares the running times of some optimization set to the running times of the DP with all optimizations}
  \vspace{-1em}
  \label{fig:eval-opts}
\end{figure*}
We compare the running time of our DP and the MILP for \sofaclapfeas on 200 instances with exactly six layers, 1000 sources and scaling factors between 1.7 and 2.3, generated as in Section~\ref{sec:random-test-data}.
We use the basic dynamic program with all optimizations as described in Section~\ref{sec:optim-dp-algor}.
We see in Figure~\ref{fig:eval-time-opts} that Gurobi only solved less than half the instances within the time limit.
It struggles, in particular, with infeasible instances where it only solves 14 instances out of 47, needing at least 500 seconds for 12 of them.
Gurobi performed better on the feasible instances, solving 62 out of 153 instances in one hour.
On the contrary, our algorithm solves all feasible instances except one (which Gurobi does not solve as well) and all infeasible instances.
In fact, many instances are solved in less than 0.002 seconds, giving us a speed-up of several magnitudes for these instances.
More specifically, we achieve a speed-up of more than six magnitudes on almost 90\% of the infeasible instances and at least a speed-up of 100 on almost 90\% of the feasible instances.

\subsection{Optimizations}
\label{sec:eval-opts}
As seen in the previous section, our dynamic program heavily outperforms the MILP solver.
This is largely due to the proposed heuristics \emph{pruning}, \emph{greedy}, \emph{estimating}, \emph{completing}, and \emph{balancing} (see Section~\ref{sec:optim-dp-algor}).
To investigate the effect of each optimization, we use the same data set as in the previous section.
Each instance in the data set is evaluated using seven different optimization sets: all optimizations, none of them and optimization sets where one optimization is missing.
The time limit is set to $500$ seconds.

We see in Figure~\ref{fig:eval-opts}, that the dynamic program without any optimizations does not perform well, and that the pruning and the greedy heuristic have a large impact on the solvability and the running time.
In fact, the pruning heuristic solves around 85\% of the infeasible instances directly while the greedy heuristic solves around 80\% of the feasible instances directly.
Without any of these two heuristics, significantly fewer instances are solved within the given time frame.
Although the pruning heuristic does not change the solvability of feasible instances, some of them are solved more quickly.
It is also apparent that the greedy heuristic poses no significant overhead, even on infeasible instances where it never succeeds.

The other optimizations do not have such high impact in general, but contribute to a better running time in specific instances, as can be seen on the right side of Figure~\ref{fig:eval-opts}.
The balancing heuristic yields a consistent speed-up of $\frac{4}{3}$ apart from a few outliers, and the complementing heuristic solves many instances more quickly without posing an overhead on the others.

More ambiguous is estimating the success of the greedy heuristic which solves several instances faster.
However, two instances, where the estimation is too pessimistic, are solved faster without this heuristic.

\subsection{Scalability}
\label{sec:eval-scale}
We evaluate how our DP performs with a larger number of sources.
For this, we generated 300 instances as described in Section~\ref{sec:random-test-data} with six layers and between 100 and 10000 sources and scaling factors $[\log_{2\layer}(n_0) - 0.3, \log_{2\layer}(n_0) + 0.3]$ depending on the number of sources $n_0$.
We present the results in Figure~\ref{fig:eval-time-scalen}.
Similarly to previously seen results, most instances (230 out of 300) were solved directly by the pruning and the greedy heuristics.
In general, the running time of the worst case instances tends to increase for instances with more sources.
However, there is a high variance and no clear trend in the instances that are not directly solved by the pruning or the greedy heuristic.
We thus suspect that the hardness of solving an instance is not primarily dictated by its size, and we investigate this further in the next section.
\begin{figure}
  \centering
  \includegraphics[width=\columnwidth]{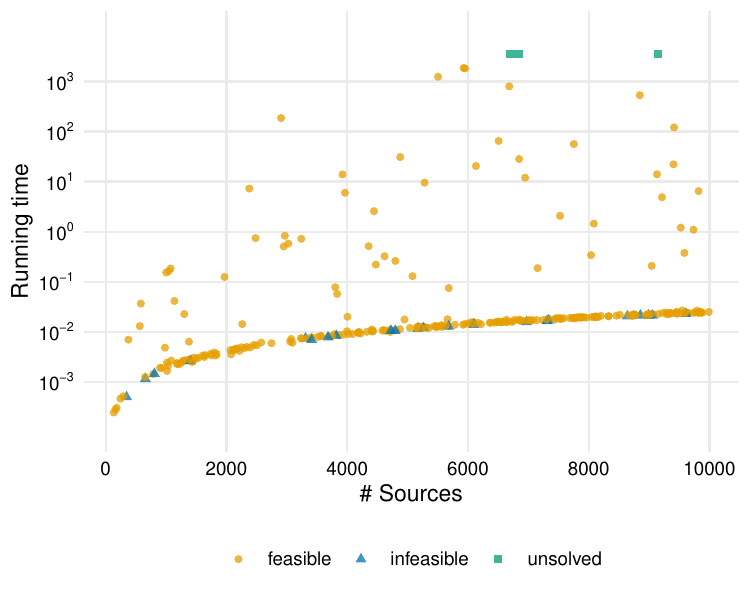}
  \caption{The running time of our DP over the number of sources.}
  \vspace*{-1em}
  \label{fig:eval-time-scalen}
\end{figure}

\subsection{Difficult Instances}
\label{sec:diff-instances}
We suspect that the ratio of the sizes of adjacent layers plays a major role for the effort our algorithm has to spend on solving an instance.
To further investigate, we generated three tests sets with different combinations of size and scaling factor.
Small instances have 1000--1500, medium instances 2000--2500, and large instances 4000--5000 sources, and we use 10 fixed scaling factors spaced equally from 1.15 to 2.5.
We generated 100 instances for each combination of size and scaling factor and solved them with a time limit of 5 minutes.
In addition, we evaluate the large instances using the dynamic program without the greedy heuristic.
The results are shown in Figure~\ref{fig:eval-factors}.
\begin{figure}
  \centering
  \includegraphics[width=\columnwidth]{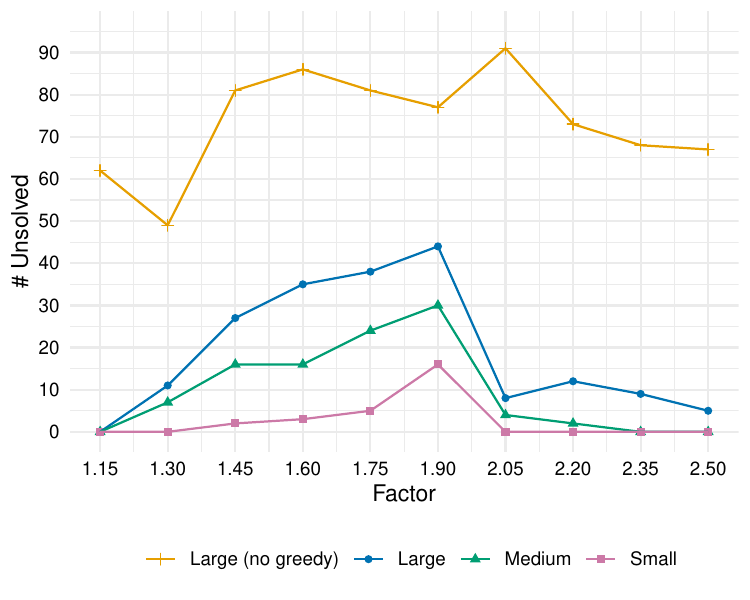}
  \caption{The number of unsolved instances after five minutes for each data set.}
  \vspace*{-1em}
  \label{fig:eval-factors}
\end{figure}

It is apparent that the scaling factor has a high impact on the hardness of an instance.
Over all instance sizes, the number of unsolved instances peaks around a scaling factor of 1.9 and decreases for both smaller and larger factors.
Clearly, larger instance sizes even increase this effect.
We suspect this is due to the decreasing capacities in each layer for smaller factors and the decreasing layer sizes for larger factors.
In both cases, there there are fewer possibilities to connect vertices which leads to fewer partial solutions and also benefits the greedy heuristic.
In fact, we observe that the greedy heuristic is most effective towards the higher and lower scaling factors.

\subsection{Optimizing \sofaclap}
\label{sec:eval-sofaclap}
As outlined before, we can obtain a feasible solution for \sofaclap by choosing an arbitrary embedding of a valid layer tree into the layer graph.
We evaluate the performance of the MILP solver with and without providing such a feasible solution found by our dynamic program.
We performed this experiment on the 200 instances from the dataset by Gritzbach~et.~al.\ \cite{gritzbach2022sofaclap}.
We measured the total time to find a solution within 0.01\% of the optimal solution.
Note that finding a feasible solution takes less than 25\,ms for each instance.

A summary of the results is given in Table~\ref{tab:eval-gurobi-dp}.
We found that, if Gurobi found a feasible solution on its own, the impact of the initial solution on the final solution quality is minuscule.
However, if Gurobi is provided a feasible solution for an instance it did not solve on its own, it could obtain a high-quality solution within the time frame.
This supports the claim that finding a feasible solution is actually the hardest part in optimizing \sofaclap.

In addition, we evaluate how our heuristic optimizations perform compared to the MILP solver by optimizing an initial solution with our local search heuristics.
We found that, on average, the heuristic solutions were 2.2 times more expensive than the ones found by Gurobi, in some cases up to three times.
However, the heuristics reached this solution quality considerably faster; on average it took Gurobi 2.5 times longer to reach the same solution quality.
The mean heuristic run took less than 27 seconds while Gurobi took 87 seconds.

\begin{table}
  \centering
  \caption{Number of instances by proximity to the optimal solution, proven infeasible or unsolved. We stopped the optimization when the cost is within 0.01\% of the lower bound found by the MILP solver.}
  \small
  \begin{tabularx}{\linewidth}{>{\ttfamily}Xrrrrrrrr}
  \toprule
  & \textless 0.01\% & \textless 1\% & \textless 2\% & Feas. & Infeas. & n.a. \\
  \midrule
  MILP    & 51       & 90    & 95    & 97       & 14         & 39 \\
  DP+MILP & 52       & 98    & 105   & 109      & 41         & 0 \\
  \bottomrule
  \end{tabularx}
  \label{tab:eval-gurobi-dp}
  \vspace{-1em}
\end{table}

\section{Conclusion}

Motivated by the \sofaclap problem, we formulated the \sofaclapfeas problem.
We designed a dynamic program with various optimizations for efficiently solving \sofaclapfeas.
Our algorithm outperforms Gurobi by several orders of magnitude and also helps finding good solutions for \sofaclap, overcoming the hurdle of finding a feasible solution.
A set of local search heuristics for \sofaclap that operate on feasible solutions complements our dynamic program.

One open question we could not answer in this work is the complexity of \sofaclapfeas.
Further investigation could also help fine-tune the optimizations used in our dynamic program and lead to an even better solution technique.
It might also be possible to extend the dynamic program itself to a heuristic solver for the \sofaclap problem by incorporating edge weights in the reconstruction process.
This would complement future work on the embedding heuristics, which still leave plenty of room for improvement.

Further research is also needed on the relation of \sofaclap and other problems.
For example, allowing the layer graph vertices to be placed arbitrarily leads to a problem resembling the geometric Steiner tree problem.
Moreover, we believe that our dynamic program can also be helpful in situations that involve not only cabling but also component cost.
Other problem classes, like hierarchical facility planning and transport problems could also benefit from insights into their relation to \sofaclapfeas.

\bibliography{sofaclap.bib}

\clearpage

\end{document}